\newtheorem{definition}{Definition}
\newtheorem{proposition}{Proposition}
\newtheorem{theorem}{Theorem}
\newtheorem{lemma}{Lemma}
\newenvironment{proof}[1][Proof]{\textbf{#1.} }{\ \rule{0.5em}{0.5em}}
\newcommand{\be}{\begin{eqnarray}}
\newcommand{\ee}{\end{eqnarray}}
\def\({\left(}
\def\){\right)}
\def\[{\left[}
\def\]{\right]}
\def\C{\mathbb{ C}}
\newcommand{\ket}[1]{\left| #1 \right\rangle}
\newcommand{\sla}[1]{\rlap{\kern .15em /}#1}
\newcommand{\ot}{\otimes}
\newcommand{\Bot}{\bigotimes}
\newcommand{\rA}{\rightarrow}
\begin{document}

\title{Universal Separability and Entanglement in Identical Particle Systems}
\author{
 Toshihiko Sasaki$^{1}$, Tsubasa Ichikawa$^2$, and Izumi Tsutsui$^{3}$
}

\affiliation{
$^1$Photon Science Center, University of Tokyo, 7-3-1 Hongo, Bunkyo-ku, Tokyo 113-8656, Japan\\
$^2$ Department of Physics, Gakushuin University,1-5-1 Mejiro, Toshima-ku, Tokyo 171-8588, Japan\\
$^3$Theory Center, Institute of Particle and Nuclear Studies, High Energy Accelerator Research Organization (KEK), 1-1 Oho, Tsukuba, Ibaraki 305-0801, Japan
}

\begin{abstract}
Entanglement is known to be a relative notion, defined with respect to the choice of physical observables to be measured ({\it i.e.}, the measurement setup used).   This implies that, in general, the same state can be both separable and entangled for different measurement setups, but this does not exclude the existence of states which are
separable (or entangled) for all possible setups.  We show that for systems of bosonic particles there indeed exist such universally separable states: they are i.i.d.~pure states.  In contrast, there is no such state for fermionic systems with a few exceptional cases.   We also find that none of the fermionic and bosonic systems admits 
universally entangled states.  
\end{abstract}

\pacs{03.65.Ud, 03.67.-a, 03.67.Lx.}
\maketitle

\section{Introduction}

Quantum entanglement is a crucial trait of quantum mechanics: it yields correlation in measurement outcomes 
that cannot be emulated by classical (local realistic) theories \cite{Bell64}.   Arguably, 
entanglement has been one of the most important subjects of study in quantum information science over the last decades, where it serves as an indispensable resource for implementing quantum algorithms and protocols \cite{NC00}.

There are mainly two lines of thoughts to characterize the entanglement: One is a structural
description based on the formal tensor product structure of the state space of a given 
composite system. In this description, pure states are entangled if it cannot be written as 
a product state \cite{Horodecki2009}. 
The other is a phenomenological description based 
on correlations. 
Pure states are entangled if they exhibit non-trivial correlation in measurement 
outcomes of mutually distinct and simultaneously measurable physical quantities \cite{Ghirardi2002, Tichy2011}. 

These two approaches are consistent as far as we consider distinguishable particles, but apparent inconsistency emerges when we consider
identical particles 
\cite{Peres1993, Paifmmodecheckselsevsfikauskas2001, 
Li2001, Eckert2002, Gittings2002, Shi2003, Enk2003, Wiseman2003, Ghirardi2004, Zanardi2004, Grabowski2011}.  Namely,  for systems of identical particles, there are some states which 
are entangled according to the former approach while they are not according to the latter.   The trouble stems from 
the fact that, even though the quantum states of identical particles are necessarily either symmetric for bosons or antisymmetric for fermions under the exchange of particles,  this formal non-product structure does not imply nontrivial correlation in the measurement outcomes.  This is a problem that cannot be dismissed, because most of the actual systems whose states have been realized as entangled are made of identical particles, whether they be photons, electrons or some other particles.   The qubit devises which are currently envisaged to carry out quantum computation are mostly designed by means of identical particles.  

In our recent work \cite{Ichikawa2010, Sasaki2011}, we presented a convenient scheme of entanglement which 
dissolves the apparent inconsistency in the previous approaches.   The idea is that,
since the latter of the two approaches defines the entanglement relative to the 
choice of the physical quantities to be measured, or the measurement setup in short, that choice can be used to 
provide the tensor product structure needed in the former for examining the entanglement.  The judgments of entanglement 
in the two are now reconciled and made consistent even for systems of identical particles.  

Once this relative nature of entanglement is taken into account properly, the following question 
arises: are there quantum states which are non-entangled for all
possible measurement setups?   If there are, such states embody non-entanglement in an absolute sense, and 
we call them {\it universally separable}.   Conversely, if  there are quantum states which are entangled for all possible measurement setups, we may call them  {\it universally entangled}.  This question is important, not just because the answer may suggest some novel notion of intrinsic entanglement  that is independent of measurement setups, but also because it should be useful in preparing entangled states which are robust against measurement perturbations.
In fact, a partial answer to the question has been already given \cite{Sasaki2011}: independently
and identically distributed (i.i.d.)~pure states (defined in Section~\ref{Separability-and-Entanglement-Relative-to-Measurement-Setups}) are universally separable.

In this article, we provide a complete answer to this question.   We present it in three theorems.
Theorem 1, which is essentially a no-go theorem, 
states that no pure states of $N$ fermions are universally separable unless the dimensionality $n$ of the constituent system is too small $n \le N+1$ to accommodate sufficient distinctive states when it is built into the composite system.   
In contrast, Theorem 2 tells us that  all pure states $N$ bosons are universally separable for $n \le 3$, and that for $n \ge 4$ the states are universally separable if and only if 
they are i.i.d.~pure states.  
Theorem 3 then gives another no-go theorem, which shows that there are no universally entangled
states in both the fermionic and bosonic systems.

This paper is organized as follows.
In Section~\ref{Separability-and-Entanglement-Relative-to-Measurement-Setups}, we recall briefly our scheme of entanglement  \cite{Sasaki2011} and provide prerequisites
for our arguments.
Section~\ref{fermithm} deals with fermionic systems and proves Theorem 1.   Similarly, 
Section~\ref{bosonthm} deals with bosonic systems and proves Theorem 2.   Theorem 3 is then treated in 
Section~\ref{univent}.  Our conclusion and discussions are given in 
Section~\ref{conclusion}.
Section~\ref{fermithm} and Section~\ref{bosonthm} are mostly devoted to the proofs of the theorems, and readers who are uninterested in the technical details
may skip these except for the statements of the two theorems presented in the beginning of the sections.

\section{Separability and Entanglement Relative to Measurement Setups}
\label{Separability-and-Entanglement-Relative-to-Measurement-Setups}

\subsection{Preliminaries}
\label{prel}
In this section, we outline our scheme of entanglement for systems of identical particles \cite{Ichikawa2010, Sasaki2011}.  Our system of interest consists of $N$ identical particles which are either bosons or fermions.
Let ${\cal H}_i$ be the Hilbert space of the $i$th constituent particle of the system with dimension $n$: ${\cal H}_i=\C^n$
for $i=1,2,\dots,N$. 
To take account of the identical nature of the particles, we introduce the symmetrizer $\mathcal{S}$ for bosons and the antisymmetrizer $\mathcal{A}$ for fermions defined by 
\begin{equation}
 \mathcal{S}=\frac{1}{N!} \sum_{\sigma\in \mathfrak{S}_{N}} \pi_{\sigma},
 \qquad
 \mathcal{A}=\frac{1}{N!} \sum_{\sigma\in \mathfrak{S}_{N}} \text{sgn}(\sigma) \pi_\sigma.  \nonumber
  \label{symop}
\end{equation}
Here, $\mathfrak{S}_{N}$ is the symmetric group associated with the permutation
$i \to \sigma(i)$ of the particles $i = 1, \ldots, N$, which is represented by the unitary operator $\pi_\sigma$ in 
the tensor product Hilbert space $\bigotimes_{i=1}^N\mathcal{H}_i$.
More precisely, the unitary operator $\pi_\sigma$ acts on the vector in $\bigotimes_{i=1}^N\mathcal{H}_i$ as
\be
&&\pi_\sigma \ket{\psi_1}_1\ot\cdots\ot|\psi_N\rangle_N\nonumber\\
&&=|\psi_{\sigma^{-1}(1)}\rangle_1\ot\cdots\ot|\psi_{\sigma^{-1}(N)}\rangle_N,
\label{perm}
\ee
for state vectors $\ket{\psi_j}_i\in{\cal H}_i$.

In what follows, as we have done in Eq.~(\ref{perm}), we always arrange the one-particle states in any
tensor product in the increasing order of the label of the constituent Hilbert space 
from the left to the right.   With this ordering convention, one can dispense with the label which refers to the constituent Hilbert space, allowing one to write the one-particle state simply as $\ket{\psi_j}$ instead of 
$\ket{\psi_j}_i$.

The total Hilbert space of the system is thus given by 
\be
\mathcal{H}_{\mathcal{X}} = \left[{\cal H}_1\ot{\cal H}_2\ot\cdots\ot{\cal H}_N\right]_{\cal X},
\label{toth}
\ee
where we have introduced the notation,
\begin{equation}
[{\cal K}]_{\cal X}:=\{\mathcal{X}\ket{\Psi} \mid \ket{\Psi} \in \mathcal{K}\}, \nonumber
\label{prosym}
\end{equation}
which denotes the subspace of $ \mathcal{K}$ obtained by the projection $\mathcal{X} = \mathcal{S}$ for bosons or $\mathcal{X} = \mathcal{A}$ for fermions, respectively.

Next, we furnish a structure which defines the entanglement with respect to the choice of measurement setups.
This additional structure consists of two ingredients.
One is how the total system breaks into subsystems, which is taken care of a partition of the system of $N$ particles, namely, a set $\Gamma=\{\Gamma_k\}_{k=1}^s$ consisting of elements which are mutually exclusive ($\Gamma_i\cap\Gamma_j=\emptyset$ for $i\neq j$) and exhaustive
$\bigcup_{k=1}^s\Gamma_k=\{1,2,\ldots, N\}$ in the total system.
The other is how these subsystems can be separately measured, which is dealt with 
an orthogonal decomposition of the one-particle Hilbert space $\C^n$, namely, 
a set $V=\{V_k\}_{k=1}^s$ of orthogonal subspaces $V_k$ such that 
\begin{equation}
 \C^n \supseteq V_1\oplus V_2\oplus\dots\oplus V_s.\nonumber
\end{equation}
Note that we have left the possibility of the case where the direct sum $\bigoplus_{k=1}^sV_{k}$ may not comprise the entire one-particle Hilbert space $\C^n$.  The set of orthogonal subspaces is meant here for describing the situation where
the states of the $N$ particles are measured by $s$ remotely separated apparatuses labeled by $k = 1, \ldots, s$.  
If one measures the states of $|\Gamma_k|$ particles in the subset $\Gamma_k$ with the apparatus $k$ for which the subspace $V_k$ is allocated, then the corresponding Hilbert space for the subset $\Gamma_k$ reads
\be
\mathcal{H}_{\mathcal{X}}(\Gamma_k,V_k) = \left[ V_k^{\otimes |\Gamma_k|}\right]_{\mathcal{X}}.
\label{sspss}
\ee
The pairwise orthogonality of $V_k$ is important to make distinctions among the particles belonging to different subsets, which is usually fulfilled by the locality of the measurement apparatuses.    

\begin{figure}[t]
   \centering
   \includegraphics[width=3in]{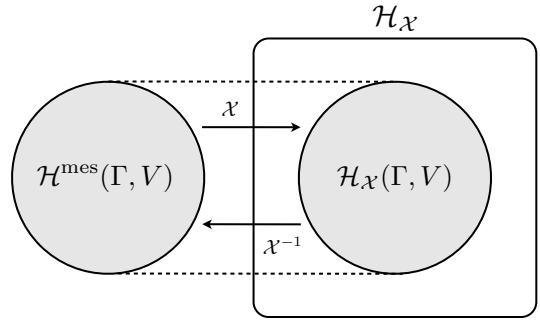} 
   \caption{Schematic diagram of spaces introduced for examination of entanglement. Given a pair $(\Gamma, V)$,  one can find in the total Hilbert space ${\cal H}_{\cal X}$
the subspace ${\cal H}_{\cal X}(\Gamma, V)$ which is unitarily equivalent to the space  ${\cal H}^{\rm mes}(\Gamma, V)$ describing the measurement results under the map ${\cal X}$.}
   \label{idmap}
 \end{figure}

By combining all the subspaces (\ref{sspss}), one can construct the Hilbert space of measurable states for the entire $N$ particles
as
\begin{equation}
\begin{split}
\mathcal{H}_{\mathcal{X}}(\Gamma,V) = \left[ \bigotimes_{k=1}^s\mathcal{H}_{\mathcal{X}}(\Gamma_k,V_k)\right]_{\mathcal{X}}.\nonumber
\end{split}
\end{equation}
Note that $\mathcal{H}_{\mathcal{X}}(\Gamma,V)$ is a subspace of $\mathcal{H}_{\mathcal{X}}$ in (\ref{toth}), but this is sufficient for our purpose because those states which belong to the orthogonal complement of $\mathcal{H}_{\mathcal{X}}(\Gamma,V)$ in $\mathcal{H}_{\mathcal{X}}$ cannot be detected by the apparatus in the measurement setup specified by the pair $(\Gamma,V)$ and hence can be safely ignored.  We also note that with
\be
\mathcal{H}^{\text{mes}}(\Gamma,V) := \bigotimes_{k=1}^s\mathcal{H}_{\cal X}(\Gamma_k,V_k),\nonumber
\ee
one can show \cite{Sasaki2011} that 
the map $\mathcal{X}:\mathcal{H}^{\text{mes}}(\Gamma,V)\rA\mathcal{H}_{\mathcal{X}}(\Gamma,V) $ is unitary, and
obviously by the inverse map $\mathcal{X}^{-1}:\mathcal{H}_{\mathcal{X}}(\Gamma,V)\rA \mathcal{H}^{\text{mes}}(\Gamma,V)$ we have
\be
\mathcal{X}\bigotimes_{k}\ket{\psi_k}\mapsto \bigotimes_{k}\ket{\psi_k},\qquad \ket{\psi_k} \in \mathcal{H}_{\cal X}(\Gamma_k,V_k),
\label{imap}
\ee
up to normalization.  This shows that the two spaces, $\mathcal{H}_{\mathcal{X}}(\Gamma,V)$ and $\mathcal{H}^{\text{mes}}(\Gamma,V)$, are isomorphic.  The point is that
the latter space $\mathcal{H}^{\text{mes}}(\Gamma,V)$ is equipped with a tensor product structure which can be used to decide entanglement of the state in $\mathcal{H}_{\mathcal{X}}(\Gamma,V)$ (see Fig.~\ref{idmap}).

To be more explicit, our procedure of examining entanglement with respect to the measurement setup specified by $(\Gamma,V)$ consists of the following four steps \cite{Sasaki2011}:
\begin{enumerate}
\item 
Given a state $\ket{\Psi} \in \mathcal{H}_\mathcal{X}$, we project it onto the subspace $\mathcal{H}_{\mathcal{X}}(\Gamma,V)$ and denote it as $\ket{\Psi(\Gamma,V)}$.
\item
We then convert the state $\ket{\Psi(\Gamma,V)}$ from $\mathcal{H}_{\mathcal{X}}(\Gamma,V)$ to $\mathcal{H}^{\text{mes}}(\Gamma,V)$ by (\ref{imap}) and denote it as $\ket{\Psi^{\text{mes}}}$.
\item
Based on the  tensor product structure of $\mathcal{H}^{\text{mes}}(\Gamma,V)$, 
we determine whether the state $\ket{\Psi^{\text{mes}}}$ is entangled or not by the standard definition of entanglement available for distinguishable particles.
\item
Since $\ket{\Psi}$ and $\ket{\Psi(\Gamma, V)}$ are indistinguishable in our measurement setup, 
we can identify the entanglement of  $\ket{\Psi(\Gamma, V)}$ with that of $\ket{\Psi}$.
\end{enumerate}
A similar argument is possible also for mixed states for which the restriction and the unitary map can be generalized straightforwardly.

\subsection{Universal Separability and i.i.d.~States}
\label{section-unisep}
As shown in Section \ref{prel}, the entanglement of the identical particle systems depends on the choice of the pair $(\Gamma,V)$, that is, how to measure the state we are given.  Thus it is curious to know, under what choice of measurement setup, a given state becomes entangled or unentangled.    Concerning this, the first question one addresses will be  if there is a special state which cannot be made entangled under any choice of measurement setup.   For this, it is convenient to introduce:
\begin{definition}
A state $\ket{\Psi}$ is universally separable  (USEP) if $\ket{\Psi(\Gamma, V)}$ is separable for any choice of $(\Gamma,V)$.
\end{definition}

A simple example of USEP states is provided by independently and identically distributed (i.i.d.) pure states, which are the states that can be written as
\begin{equation}
 \ket{\Psi} = \ket{\phi}^{\otimes N},\nonumber
\end{equation}
with some $\ket{\phi}\in \mathbb{C}^n$.
Note that, being symmetric states,  i.i.d.~states are allowed only for bosonic systems.

To examine if the i.i.d.~states are indeed USEP,  let us apply our entanglement criterion to the i.i.d.~state.
Given a pair $(\Gamma, V)$, the projected state $\ket{\Psi(\Gamma,V)}$ of the i.i.d.~state $\ket{\Psi}$
is given by
\begin{equation}
 \ket{\Psi(\Gamma,V)} =
 \sqrt{M}
 \mathcal{S}\bigotimes_{k=1}^s \ket{\phi_k}^{\otimes|\Gamma_k|},
 \label{sep}
\end{equation}
where 
$\sqrt{M}$ is a normalization constant defined through the
multinomial coefficient $M=N!/\prod_{i=1}^s|\Gamma_i|!$
and
$\ket{\phi_k}$ is the normalized state obtained by projecting $\ket{\phi}$ onto $V_k$ and rescaling it.
This state is mapped to $\mathcal{H}^{\text{mes}}(\Gamma,V)$ as
\begin{equation}
\ket{\Psi^{\text{mes}}}=
 \bigotimes_{k=1}^s \ket{\phi_k}^{\otimes|\Gamma_k|}.\nonumber
\end{equation}
We then find that an i.i.d.~pure state  is a separable state for any $(\Gamma,V)$,   
implying that they are USEP as announced.  

Do these i.i.d.~pure states exhaust all possible USEP states in the bosonic case?  How about the USEP in the fermionic case?  These are the questions we address and answer in the following sections.  We begin by the fermionic case first, as it is simpler.

\section{Universally Separable States in Fermionic Systems}
\label{fermithm}

We first consider fermionic systems to pin down what USEP states are.   Specifically, we
prove:
\begin{theorem}
\label{th1}
For $N$-partite fermionic systems for which the constituent Hilbert space is $\mathbb{C}^n$, we have
\begin{enumerate}
 \item \label{unisep-fermionic-N1}
For $n \leq N+1$, all pure states are USEP.
 \item \label{unisep-fermionic-N2}
For $n \geq N+2$, no pure states are USEP.
\end{enumerate}
\end{theorem}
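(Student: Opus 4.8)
The plan is to recast everything in the exterior algebra. For fermions the total space $\mathcal{H}_{\mathcal{A}}$ is $\wedge^N\C^n$, a Slater configuration $\mathcal{A}(\ket{\phi_1}\ot\cdots\ot\ket{\phi_N})$ being the decomposable element $\phi_1\wedge\cdots\wedge\phi_N$. Given a setup $(\Gamma,V)$ with $m_k=|\Gamma_k|$, the orthogonal sum $V_1\oplus\cdots\oplus V_s$ induces the occupation grading $\wedge^N(\bigoplus_k V_k)=\bigoplus_{a_1+\cdots+a_s=N}\bigotimes_k\wedge^{a_k}V_k$, and I claim the projection $\ket{\Psi(\Gamma,V)}$ is precisely the component of $\ket\Psi$ in the single summand with $a_k=m_k$, with the isomorphism (\ref{imap}) sending a decomposable wedge to the corresponding product. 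Hence $\ket{\Psi(\Gamma,V)}$ is separable iff this component is a simple tensor $\psi_1\ot\cdots\ot\psi_s$ with $\psi_k\in\wedge^{m_k}V_k$. The one numerical fact I use repeatedly is $\dim\wedge^{m_k}V_k=\binom{\dim V_k}{m_k}$, which equals $1$ exactly when $\dim V_k=m_k$ and is $\ge2$ when $\dim V_k\ge m_k+1$.

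Part 1 is then pure dimension counting. For the projected component to be nonzero each factor must satisfy $\dim V_k\ge m_k$, so $\sum_k\dim V_k\ge\sum_k m_k=N$; since the subspaces are orthogonal, $\sum_k\dim V_k\le n\le N+1$. Thus at most one index $k$ can have $\dim V_k=m_k+1$ while all others have $\dim V_k=m_k$. Consequently at most one tensor factor $\wedge^{m_k}V_k$ has dimension exceeding $1$, and a tensor product in which all but one factor is one-dimensional is automatically a simple tensor. Therefore $\ket{\Psi(\Gamma,V)}$ is separable for every $(\Gamma,V)$, so all pure states are USEP when $n\le N+1$.

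For Part 2 I would induct on $N\ge2$. In the base case $N=2$, put $\omega=\ket\Psi\in\wedge^2\C^n$ into its canonical (Youla) form $\omega=\sum_{i=1}^k\lambda_i\,e_{2i-1}\wedge e_{2i}$ in a suitable orthonormal basis with $\lambda_i>0$. If $k\ge2$, the setup $V_1=\langle e_1,e_3\rangle$, $V_2=\langle e_2,e_4\rangle$ with $m_1=m_2=1$ projects $\omega$ onto $\lambda_1\,e_1\ot e_2+\lambda_2\,e_3\ot e_4$, a rank-$2$, hence entangled, tensor. If $k=1$ then $\omega=\lambda_1 e_1\wedge e_2$ is a single Slater determinant; since $n\ge4$ there are two spare orthonormal directions $e_3,e_4$, and rotating them in via $V_1=\langle(e_1{+}e_3)/\sqrt2,(e_2{+}e_4)/\sqrt2\rangle$, $V_2=\langle(e_1{-}e_3)/\sqrt2,(e_2{-}e_4)/\sqrt2\rangle$ again yields an entangled $(1,1)$-projection. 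Either way $\omega$ is not USEP. For the inductive step, given $\omega\in\wedge^N\C^n$ with $n\ge N+2$, pick a unit vector $w$ for which the decomposition $\omega=w\wedge\omega'+\omega''$, with $\omega'\in\wedge^{N-1}(w^\perp)$ and $\omega''\in\wedge^N(w^\perp)$, has $\omega'\neq0$ (possible since $\omega\neq0$). Since $\dim w^\perp=n-1\ge(N-1)+2$, the induction hypothesis supplies a setup $\{V'_l\}\subseteq w^\perp$ with occupations $\{m'_l\}$ entangling $\omega'$. Adjoining the one-particle group $V_0=\langle w\rangle$, orthogonal to every $V'_l$, gives a setup for $\omega$ whose projection factorizes as $w\ot[\,\text{projection of }\omega'\,]$; a fixed vector tensored with an entangled state is entangled, so $\omega$ is not USEP.

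The main obstacle is the base case, and within it the single Slater determinant: the dimension count shows a Slater state can never be entangled by a diagonal setup respecting its occupied orbitals, so one genuinely needs the off-diagonal rotation into the two spare dimensions, and this is exactly where the hypothesis $n\ge N+2$ is consumed. The remaining points are routine but must be checked with care, namely that the $(\Gamma,V)$-projection really is the $\bigotimes_k\wedge^{m_k}V_k$-component under (\ref{imap}), and that the lift in the inductive step cleanly splits off the $w$ factor; both follow from orthogonality of the $V_k$ together with the occupation grading of $\wedge^N(\bigoplus_k V_k)$.
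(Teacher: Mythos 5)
Your proposal is correct and essentially reproduces the paper's own argument: case 1 is the same dimension count (orthogonality plus $n\leq N+1$ forces all but one factor $\mathcal{H}_{\mathcal{A}}(\Gamma_k,V_k)$ to be one-dimensional, so the projected state is automatically a product), and case 2 rests on the identical $N=2$, $n\geq 4$ lemma, proved exactly as in the paper via the Slater decomposition, the pairing setup $V_1=\mathrm{span}\{\ket{e_1'},\ket{e_3'}\}$, $V_2=\mathrm{span}\{\ket{e_2'},\ket{e_4'}\}$ for Slater rank $\geq 2$, and the same rotation into two spare orthonormal directions for rank $1$. The only difference is organizational: you reduce general $N$ to $N=2$ by induction, peeling off one particle at a time into a one-dimensional group $\mathrm{span}\{\ket{w}\}$, whereas the paper freezes all $N-2$ extra particles at once into an $(N-2)$-dimensional subspace and then refines the setup --- in both cases the factorization of the projected state is forced because the frozen sector is one-dimensional, so the two reductions are the same mechanism.
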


Before proceeding, we outline the basic idea of our proof.
The first statement, hereafter called case 1, is more or less trivial and straightforward to prove.  For this we just
show that the dimension of $\mathcal{H}_{\mathcal{A}}(\Gamma,V)$, 
which is regarded as the total Hilbert space under the measurement setup $(\Gamma,V)$,
is too small to accommodate any entangled states.
In contrast, the second statement (case 2) is quite nontrivial and important.
According to the definition of USEP, such a state $\ket{\Psi(\Gamma,V)}$ must be separable for arbitrary $(\Gamma,V)$.
With each $(\Gamma,V)$ chosen, this leads to some restrictions to the state $\ket{\Psi}$, and by varying the choice of 
$(\Gamma, V)$ we can show that $\ket{\Psi}=0$,
completing the proof.

\subsection{Case \ref{unisep-fermionic-N1}: $n \leq N+1$}

\noindent
\begin{proof}[Proof of Case \ref{unisep-fermionic-N1} of Theorem \ref{th1}]
To begin with, we note that since two fermions cannot occupy an identical state, we have $\dim V_i\ge|\Gamma_i|$
for all $i$, which leads to
\be
 N= \sum_{i=1}^s|\Gamma_i| \leq \sum_{i=1}^s\dim V_i \le n.\nonumber
\ee
From the condition $n \leq N+1$ of the present case, we observe $\dim V_i=|\Gamma_i|$ for all $i$ except possibly for one element.
We can label $i = s$ for such an exceptional subsystem without losing generality.  
This implies that $\dim\mathcal{H}_{\mathcal{A}}(\Gamma_i,V_i)=1$ at least for $i\in\{1,2,\ldots,s-1\}$.  The orthogonality $V_{i}\perp V_j$ for $i\neq j$ allows us
to choose an orthonormal basis $\{\ket{e_i}\}_{i=1}^n$ in ${\cal H}_{\rm c}$ such
that
\be
V_i&=&{\rm span}\{\ket{e_{\alpha_i+1}},\ldots,|e_{\alpha_i+|\Gamma_i|}\rangle\},\nonumber
\ee
where we have introduced $\alpha_i$ which are recursively defined by 
$\alpha_{i+1}=\alpha_i+|\Gamma_i|$ with the initial condition $\alpha_1=0$.
Then, by construction, $\mathcal{H}_{\mathcal{A}}(\Gamma_i,V_i)$ contains only a single state,
\be
\ket{\phi_i}=\sqrt{|\Gamma_i|!}\, {\cal A} (|e_{\alpha_i+1}\rangle\otimes\cdots\ot|e_{\alpha_i+|\Gamma_i|}\rangle)\nonumber
\ee
for $i\in\{1,2,\ldots,s-1\}$, up to an overall constant.
It follows that any state $\ket{\Psi(\Gamma,V)}$ takes the form of a separable state,
\begin{equation}
 \ket{\Psi(\Gamma,V)}=\sqrt{M}\mathcal{A}\(\ket{\phi_1}\otimes\cdots\otimes\ket{\phi_{s-1}}\otimes\ket{\psi_s}\)\nonumber
\end{equation}
with some state $\ket{\psi_s}\in \mathcal{H}_{\mathcal{A}}(\Gamma_s,V_s)$.
Since this argument is independent of the choice of $(\Gamma,V)$, we learn that $\ket{\Psi}$ is USEP.
\end{proof}

\subsection{Case \ref{unisep-fermionic-N2}: $n \geq N+2$}

To prove case \ref{unisep-fermionic-N2}, we first show:
\begin{lemma}
 \label{unisep-fermionic-N2-lemma}
 For $N=2$ and $n\geq N+2=4$,  there exists no fermionic USEP state.
\end{lemma}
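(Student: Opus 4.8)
The plan is to specialize the whole machinery to $N=2$, where a pure fermionic state is simply an antisymmetric two-index object: write $\ket{\Psi}=\sum_{i<j}c_{ij}\,\mathcal{A}(\ket{e_i}\otimes\ket{e_j})$ and encode it by the antisymmetric matrix $C=(c_{ij})$ on $\C^n$. For $N=2$ only two partitions exist: the trivial one $\Gamma=\{\{1,2\}\}$, for which $\mathcal{H}^{\mathrm{mes}}$ has a single tensor factor and $\ket{\Psi(\Gamma,V)}$ is trivially separable, and the genuine bipartition $\Gamma=\{\{1\},\{2\}\}$ with $|\Gamma_1|=|\Gamma_2|=1$, for which $\mathcal{H}^{\mathrm{mes}}(\Gamma,V)=V_1\otimes V_2$ with orthogonal one-particle subspaces $V_1\perp V_2$. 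First I would observe that, under the identification (\ref{imap}), the mapped vector $\ket{\Psi^{\mathrm{mes}}}\in V_1\otimes V_2$ has coefficient matrix (up to an overall normalization) $M_{\alpha\beta}=\langle f_\alpha|\langle g_\beta|$ applied to the antisymmetric coefficients, i.e.\ the off-diagonal block of $C$ coupling an orthonormal basis $\{f_\alpha\}$ of $V_1$ to an orthonormal basis $\{g_\beta\}$ of $V_2$. A bipartite pure state is separable exactly when its Schmidt rank is one, so $\ket{\Psi(\Gamma,V)}$ is separable precisely when this block has matrix rank at most one. The lemma thus reduces to the clean statement: \emph{if $C\neq0$ and $n\geq4$, then some orthogonal pair $(V_1,V_2)$ makes the coupling block of rank two.}

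Next I would strip away the basis dependence. Changing the one-particle orthonormal basis by a unitary $U$ acts on $C$ by the congruence $C\mapsto UCU^{T}$ and simultaneously rotates the admissible pairs $(V_1,V_2)\mapsto(UV_1,UV_2)$, preserving orthogonality and preserving the rank of every coupling block; hence being USEP is invariant under unitary congruence of $C$. I may therefore assume $C$ is in the standard skew-normal (Youla/Takagi-type) form for complex antisymmetric matrices, namely block-diagonal with $2\times2$ blocks $\lambda_k J$, where $J=\left(\begin{smallmatrix}0&1\\-1&0\end{smallmatrix}\right)$ and $\lambda_k>0$, supplemented by zeros. Assuming $\ket{\Psi}\neq0$, at least $\lambda_1>0$, and the argument splits according to how many nonzero blocks survive.

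If at least two blocks survive, the choice $V_1=\mathrm{span}\{\ket{e_1},\ket{e_3}\}$ and $V_2=\mathrm{span}\{\ket{e_2},\ket{e_4}\}$ yields the diagonal coupling block $M=\mathrm{diag}(\lambda_1,\lambda_2)$, which already has rank two, so $\ket{\Psi}$ is not USEP. The genuinely nontrivial case, which I expect to be the main obstacle, is the single-block case $C\simeq\lambda_1 J$, that is, a single Slater determinant $\ket{e_1}\wedge\ket{e_2}$: this state is ``unentangled'' in the naive mode sense, yet it too must be shown to fail universal separability. Here I would use the two spare dimensions guaranteed by $n\geq4$ and pick the \emph{rotated} orthogonal pair $V_1=\mathrm{span}\{(\ket{e_1}+\ket{e_3})/\sqrt2,\,(\ket{e_2}+\ket{e_4})/\sqrt2\}$ and $V_2=\mathrm{span}\{(\ket{e_1}-\ket{e_3})/\sqrt2,\,(\ket{e_2}-\ket{e_4})/\sqrt2\}$, which are mutually orthogonal; a direct evaluation gives a coupling block proportional to $J$, again of rank two. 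In either case $\ket{\Psi(\Gamma,V)}$ is entangled, contradicting USEP, so the only USEP state is $\ket{\Psi}=0$, i.e.\ no normalizable fermionic USEP state exists for $N=2$, $n\geq4$.

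The dimensional threshold $n\geq4$ is exactly what makes the construction possible, and pinpointing it is the conceptual heart of the proof: producing a rank-two coupling block requires $V_1$ and $V_2$ to each be at least two-dimensional \emph{and} mutually orthogonal, which needs $\dim\ge 4$; for $n\leq3$ every admissible block is at most $2\times1$ (hence rank $\le1$), consistent with case 1. The step I anticipate needing the most care is the single-Slater case, both in exhibiting the rotated subspaces and in tracking the conjugation conventions relating the coupling block to $C$; the multi-block case and the normal-form reduction are then routine.
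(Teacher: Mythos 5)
Your proposal is correct and follows essentially the same route as the paper: your skew-normal (Youla) form of the antisymmetric coefficient matrix $C$ is precisely the Slater decomposition the paper invokes, your multi-block choice $V_1=\mathrm{span}\{\ket{e_1},\ket{e_3}\}$, $V_2=\mathrm{span}\{\ket{e_2},\ket{e_4}\}$ matches the paper's first setup forcing $z_1=0$ or $z_2=0$, and your rotated pair for the single-Slater case is exactly the paper's basis $\{\ket{e_i''}\}$. The rank-of-coupling-block language is just a tidier repackaging of the Schmidt-rank criterion used implicitly in the paper, not a different argument.
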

\begin{proof}
We prove this by reductio ad absurdum.
Suppose that a state $\ket{\Psi}\in \mathcal{H}_{\mathcal{A}}$ is USEP.
On the one hand, by using an orthonormal basis $\{\ket{e_i}\}_{i=1}^n$ of $\mathbb{C}^n$, 
$\ket{\Psi}$ is written as
\begin{equation}
 \ket{\Psi} = \frac{1}{2}\sum_{ i,j = 1,\cdots, n}\Psi_{ij} \ket{e_i}\ket{e_j},
 \qquad 
 \Psi_{ji} = -\Psi_{ij}.\nonumber
\end{equation}
On the other hand, it follows from Slater decomposition \cite{Schliemann2001, Eckert2002} 
that $\ket{\Psi}$ is written as
\begin{equation}
 \ket{\Psi} = \sum_{i=1}^{K} z_i\mathcal{A}\ket{e_{2i-1}'}\ket{e_{2i}'},
 \qquad
  2K \leq n\nonumber
\end{equation}
by using complex number $z_i\in\C$ and an appropriate orthonormal basis $\{\ket{e_i^\prime}\}_{i=1}^n$.

Now, we choose $\Gamma$ and $V$ such that
$
 \Gamma = \{\Gamma_1,\Gamma_2\}
$
with
\be
  \Gamma_{1}=\{1\}
  \quad
  {\rm and}
  \quad
  \Gamma_2=\{2\},\nonumber
\ee
and
$
 V = \{V_1,V_2\}
$
with
\be 
V_1 = \text{span}\{\ket{e_1'},\ket{e_3'}\}
\quad
  {\rm and}
  \quad
 V_2 = \text{span}\{\ket{e_2'},\ket{e_4'}\}.\nonumber
\ee
Then we find
\begin{equation}
 \ket{\Psi(\Gamma,V)} = \kappa{\cal A}\(z_1|e^\prime_1\rangle|e^\prime_2\rangle+z_2|e^\prime_3\rangle|e^\prime_4\rangle\).\nonumber
\end{equation}
where $\kappa=\sqrt{2/(|z_1|^2+|z_2|^2)}$ is the normalization constant.
Since $\ket{\Psi}$ is supposed to be USEP, we obtain $z_1=0$ or $z_2=0$.

When $z_2=0$, we have
\begin{equation}
 \ket{\Psi (\Gamma, V)} = z_1\mathcal{A}\ket{e_1'}\ket{e_2'}.\nonumber
\end{equation}
Next, we choose $V'= \{V'_1,V'_2\}$ such that
\be
 V'_1 = \text{span}\{\ket{e_1''},\ket{e_2''}\}
 \quad
  {\rm and}
  \quad
V'_2 = \text{span}\{\ket{e_3''},\ket{e_4''}\},\nonumber
\ee
where
\be
 \ket{e_1''} = \(\ket{e_1'}+\ket{e_3'}\)/\sqrt{2},
 \qquad
 \ket{e_2''} = \(\ket{e_2'}+\ket{e_4'}\)/\sqrt{2},\nonumber\\
\ket{e_3''} = \(\ket{e_1'}-\ket{e_3'}\)/\sqrt{2},
\qquad
 \ket{e_4''} = \(\ket{e_2'}-\ket{e_4'}\)/\sqrt{2}.\nonumber
\ee
By using ${\cal A}\ket{e_3''}\ket{e_2''}=-{\cal A}\ket{e_2''}\ket{e_3''}$,  $\ket{\Psi}$ is given by
\begin{eqnarray}
 \ket{\Psi} &=& \frac{z_1}{2}\mathcal{A}
	\left(
	 \ket{e_1''}\ket{e_4''}-\ket{e_2''}\ket{e_3''}
	+
		\ket{e_1''}\ket{e_2''}+\ket{e_3''}\ket{e_4''}
	 \right)\nonumber\\
	 &+&\sum_{i=3}^{K} z_i\mathcal{A}\ket{e_{2i-1}'}\ket{e_{2i}'}.\nonumber
\end{eqnarray}
We then find
\begin{equation}
 \ket{\Psi(\Gamma,V')} = \frac{z_1}{\sqrt{2}}\mathcal{A}
	 \(\ket{e_1''}\ket{e_4''}-\ket{e_2''}\ket{e_3''}\),\nonumber
\end{equation}
which is an entangled state.
This contradicts the universal separability of $\ket{\Psi}$. 
The same argument holds when $z_1=0$.
\end{proof}

Armed with this lemma, we now finish our proof of Theorem \ref{th1}.\\
\begin{proof}[Proof of Case \ref{unisep-fermionic-N2} of Theorem \ref{th1}]
We again prove this by reductio ad absurdum.
Suppose that a normalized state $\ket{\Psi}\in{\cal H}_{\cal A}$ is USEP.
By using an orthonormal basis $\{\ket{e_i}\}_{i=1}^n$ of $\mathbb{C}^n$,
$\ket{\Psi}$ is written as
\begin{equation}
 \ket{\Psi} = \sum_{1\leq i_1< i_2 <\cdots < i_N\leq n}\Psi_{i_1i_2\cdots i_N}
	\mathcal{A}\bigotimes_{k=1}^N \ket{e_{i_k}}.
	\label{ferexp}
\end{equation}
Without loss of generality, we set $\Psi_{12\cdots N}\neq 0$ by renaming the basis vectors.
Further we set $\Gamma=\{\Gamma_1, \Gamma_2\}$ and $V=\{V_1, V_2\}$ to
\be
\Gamma_1 = \{1,2\}
\quad
{\rm and}
\quad
\Gamma_2 = \{3,4,\cdots, N\},\nonumber
\ee
and
\be
 V_2 = \text{span}\{\ket{e_i}\}_{i=1}^{N-2}
 \quad
{\rm and}
\quad
V_1 = V_2^{\perp},\nonumber
\ee
respectively. From the universal separability, $\ket{\Psi(\Gamma,V)}$ must be of the form,
\be
 \ket{\Psi(\Gamma,V)} = \sqrt{M}\mathcal{A}\(\ket{\psi_1}_{\Gamma_1}\ot\ket{\psi_2}_{\Gamma_2}\),\nonumber
\ee
where $\sqrt{M}$ is the normalization constant and $\ket{\psi_i}_{\Gamma_i}\in{\cal H}_{\cal A}(\Gamma_i, V_i)$.
Since $\dim V_1 = n - (N-2)\geq 4$, we can apply Lemma \ref{unisep-fermionic-N2-lemma} to $\ket{\psi_1}_{\Gamma_1}$ to see that there exists 
a pair $(\Gamma', V')$ with
\be
\Gamma'=\{\{1\},\{2\}\},\quad V'=\{V'_1,V'_2\}\nonumber
\ee
with which $\ket{\psi_1(\Gamma',V')}_{\Gamma_1}$ becomes an entangled state.
This implies that by the choice of the pair $(\Gamma'', V'')$ with
\be
\Gamma''=\{\{1\},\{2\},\{3,4,\cdots,N\}\}, \, \, V''=\{V'_1,V'_2,V_2\},\nonumber
\ee
the state $\ket{\Psi(\Gamma'',V'')}$ becomes entangled.
This invalidates the assumption made at the beginning. 
\end{proof}

\section{Universally separable states in bosonic systems}
\label{bosonthm}
Now we return to the bosonic systems and consider whether the converse of
the statement in Section~\ref{section-unisep} holds, that is, whether the universal
separability implies the i.i.d.~property.  We shall see that this is indeed the case except for
systems with $n \leq 3$.
\begin{theorem}
\label{th2}
For $N$-partite bosonic systems for which the constituent Hilbert space is $\mathbb{C}^n$, we have
\begin{enumerate}
 \item \label{unisep-bosonic-n3}
For $n \leq 3$, all pure states are USEP.
 \item \label{unisep-bosonic-n4}
For $n \geq 4$, pure states are USEP if and only if they are i.i.d.~pure states.
\end{enumerate}
\end{theorem}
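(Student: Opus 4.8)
The plan is to handle the two items by the same dimension-plus-reduction reasoning used for fermions, but now aiming at the conclusion that the state is i.i.d.\ rather than that it vanishes.

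For case \ref{unisep-bosonic-n3} I would argue purely by counting dimensions. For any setup $(\Gamma,V)$ the subspaces are mutually orthogonal, so $\sum_{k}\dim V_k\le n\le 3$; hence at most one $V_k$ can have $\dim V_k\ge 2$, while each $V_k$ with $\dim V_k=1$ gives a one-dimensional factor $\mathcal{H}_{\mathcal{S}}(\Gamma_k,V_k)=[V_k^{\ot|\Gamma_k|}]_{\mathcal S}$. A vector in a tensor product in which at most one factor is non-trivial is automatically a product vector, so $\ket{\Psi^{\mathrm{mes}}}$ is separable for every $(\Gamma,V)$ and every $\ket{\Psi}$. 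Thus all pure states are USEP when $n\le3$.

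For case \ref{unisep-bosonic-n4} the direction i.i.d.\ $\Rightarrow$ USEP is already contained in Section~\ref{section-unisep}, so the substance is the converse. I would first settle the base case $N=2$, $n\ge4$ in close analogy with Lemma~\ref{unisep-fermionic-N2-lemma}, replacing the Slater decomposition by its bosonic counterpart, the Autonne--Takagi decomposition: any symmetric two-boson state can be written as $\ket{\Psi}=\sum_{i=1}^{K}\lambda_i\ket{f_i}\ket{f_i}$ with $\lambda_i>0$ and an orthonormal family $\{\ket{f_i}\}$. If $K\ge2$, I pick auxiliary orthonormal $\ket{f_3},\ket{f_4}$ orthogonal to $\ket{f_1},\ket{f_2}$ (available since $n\ge4$) and set $\ket{g_1}=(\ket{f_1}+\ket{f_3})/\sqrt2$, $\ket{g_3}=(\ket{f_1}-\ket{f_3})/\sqrt2$, $\ket{g_2}=(\ket{f_2}+\ket{f_4})/\sqrt2$, $\ket{g_4}=(\ket{f_2}-\ket{f_4})/\sqrt2$. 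Choosing $\Gamma=\{\{1\},\{2\}\}$ with $V_1=\mathrm{span}\{\ket{g_1},\ket{g_2}\}$ and $V_2=\mathrm{span}\{\ket{g_3},\ket{g_4}\}$, only the cross terms survive the projection and one finds $\ket{\Psi^{\mathrm{mes}}}\propto\lambda_1\ket{g_1}\ket{g_3}+\lambda_2\ket{g_2}\ket{g_4}$, which has Schmidt rank two and is therefore entangled. This contradicts USEP and forces $K=1$, i.e.\ the i.i.d.\ form.

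For general $N$ I would reduce to this base case. Once $n\ge5$ one can assign $\Gamma_1=\{1,2\}$ to a subspace $V_1$ with $\dim V_1\ge4$ and the remaining particles $\Gamma_2=\{3,\dots,N\}$ to an orthogonal $V_2$; universal separability then makes $\ket{\Psi(\Gamma,V)}$ a product whose two-particle factor, being itself a USEP two-boson state in $V_1$ (refining the measurement on $\Gamma_1$ preserves separability), must be i.i.d.\ by the base case. The remaining task---and the main obstacle---is to glue these local conclusions into a single global one. The clean way to phrase the target is that $\ket{\Psi}$ is i.i.d.\ if and only if its one-particle reduced density matrix has rank one, equivalently that, viewing $\ket{\Psi}$ as a degree-$N$ form, all of its first partial derivatives are proportional. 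Since each setup only yields such proportionality after restriction to the measured subspaces, one must combine enough choices of $(\Gamma,V)$ to upgrade these partial statements to the global rank-one condition. The genuinely hard point is $n=4$: there one cannot house a two-particle block in a four-dimensional subspace while leaving room for the other particles, so the reduction to the base case is unavailable, and the argument must instead run through the family of measurements splitting $\mathbb{C}^4=V_1\op V_2$ into two two-dimensional blocks with every particle partition, showing that simultaneous factorization of all such bidegree components already forces $\ket{\Psi}=\ket{\phi}^{\ot N}$.
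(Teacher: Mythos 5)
Your case~\ref{unisep-bosonic-n3} argument is correct and is essentially the paper's: orthogonality forces at most one $V_k$ with $\dim V_k\ge 2$, and a tensor product in which at most one factor is nontrivial contains only product vectors. Your $N=2$ base case is also correct, and it is a route the paper does not take: the paper never isolates a two-boson case, whereas your Autonne--Takagi decomposition $\ket{\Psi}=\sum_i\lambda_i\ket{f_i}\ket{f_i}$ together with the rotated split $V_1=\mathrm{span}\{\ket{g_1},\ket{g_2}\}$, $V_2=\mathrm{span}\{\ket{g_3},\ket{g_4}\}$ is a faithful bosonic analogue of the fermionic Lemma~\ref{unisep-fermionic-N2-lemma}, and the projected state proportional to $\lambda_1\mathcal{S}\ket{g_1}\ket{g_3}+\lambda_2\mathcal{S}\ket{g_2}\ket{g_4}$ is indeed entangled unless $\lambda_1\lambda_2=0$.

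For case~\ref{unisep-bosonic-n4} with general $N$, however, your proposal stops exactly where the theorem's difficulty begins, so there is a genuine gap. Unlike the fermionic case, the bosonic projection $\ket{\Psi(\Gamma,V)}$ retains only a single summand of the direct-sum decomposition of $\ket{\Psi}$ over particle-number distributions between $V_1$ and $V_2$ (the decomposition displayed in the proof of Lemma~\ref{unisep-bosonic-n4-lemma-ab}); universal separability therefore constrains each summand \emph{individually}, and nothing in your outline upgrades these componentwise product forms to a single global product $\ket{\phi}^{\ot N}$. You name this ``gluing'' problem and the $n=4$ obstruction, but you do not solve either, and solving them is precisely the content of the paper's Lemmas~\ref{unisep-lemma-nonzero-main}--\ref{symmetry-general-lemma-binominal-12-34} and Propositions~\ref{n4iid} and \ref{unisep-lemma-psi-general}: one needs (i) the simultaneous factorizations $\Psi_{l_1,l_2,l_3,l_4}=a_{l_1,l_2}b_{l_3,l_4}=c_{l_1,l_3}d_{l_2,l_4}$ from the $12|34$ and $13|24$ splits for all particle partitions $\Gamma(i)$, (ii) a basis-change genericity argument (Lemmas~\ref{unisep-lemma-nonzero-main} and \ref{unisep-lemma-nonzero-basic}) making all coefficients nonzero so that ratios are defined, and (iii) --- the key step your sketch lacks entirely --- separability under the \emph{continuously rotated} subspaces $V'(U)$, whose first-order-in-$\epsilon$ consequences are the recursion relations (\ref{recA}) and (\ref{recC}) that force the binomial, coherent-state structure (\ref{4dec}) and hence the i.i.d.~form at $n=4$. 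The extension to $n>4$ is then an induction on $n$ (Lemmas~\ref{symmetry-general-lemma-sub-univsep} and \ref{unisep-lemma-psi-ij}, Proposition~\ref{unisep-lemma-psi-general}), not a reduction to two-particle blocks, which for $n\ge5$ would still face the same gluing problem. Your rank-one reduced-density-matrix criterion is true but is never connected to the measurement data. As it stands, the proposal proves the hard direction of case~\ref{unisep-bosonic-n4} only for $N=2$.
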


In what follows, we shall prove this in a way similar to Section~\ref{fermithm}. 
For case 1, we will show that for any choice of $(\Gamma, V)$, the state $\ket{\Psi(\Gamma, V)}$ 
takes the form of Eq.~(\ref{sep}), meaning the separability. The proof of case 2 is technically involved and requires several lemmata and propositions before completing it.
Basically, the argument consists 
of three steps. In the first step, we show that we can choose an appropriate basis on 
which all the coefficients of a four-partite state become nonzero. In the second step, 
we prove that a USEP state is always an i.i.d.~state when $n=4$. In the last step, 
we extend this result to the general cases.

\subsection{Case \ref{unisep-bosonic-n3}: $n \leq 3$}

\noindent
\begin{proof}[Proof of Case \ref{unisep-bosonic-n3} of Theorem \ref{th2}]
We consider whether a bosonic state $\ket{\Psi}$ is separable into $s$ subsystems.
Since $\dim V_i$ is no less than 1 for $i\in\{1,2,\ldots,s\}$, the dimension of the constituent space, $n = {\rm dim}\,{\cal H}_i$, must be no less than $s$.
Besides, the number of the subsystems must be $s\geq 2$ to allow for entanglement between the subsystems.
Obviously, if $n=s = 2$ or 3, then $\dim V_i=1$ for all $i$, and if 
$n=3$, $s=2$, then $\dim V_i=1$ except for one subsystem.  

In the former case, we can write $V_i$ as $\text{span}\{\ket{e_i}\}$ by choosing appropriate $\{\ket{e_i}\}$.
It is clear that the only nonzero state in $\mathcal{H}_{\mathcal{S}}(\Gamma,V)$ is $\sqrt{M}\mathcal{S}\bigotimes_i\ket{e_i}^{\otimes |\Gamma_i|}$.
This state is separable.

In the latter case, as we did before we 
let $i=2$ be this exceptional subsystem without loss of generality.
Namely, we assume $\dim V_1 =1, \dim V_2 =2$.
We write $V_1$ and $V_2$ as $\text{span}\{\ket{e_1}\}$ and $\text{span}\{\ket{e_2},\ket{e_3}\}$, respectively.
Then by construction, with some $\ket{\psi}\in \mathcal{H}_{\mathcal{S}}(\Gamma_2,V_2)$,  any state $\ket{\Psi(\Gamma,V)}$ takes the form,
\begin{equation}
 \ket{\Psi(\Gamma,V)} =\sqrt{M}\mathcal{S}(\ket{e_1}^{\otimes |\Gamma_1|}\otimes\ket{\psi}),\nonumber
\end{equation}
which is clearly separable with respect to $(\Gamma, V)$.
Since this argument is independent of the choice of $(\Gamma,V)$, we see that
all pure states are USEP.  
 \end{proof}
 
In passing we mention that the above argument can actually be employed to prove the statement even for $n > 3$ if $n=s$ or $n=s+1$.   

\subsection{Case \ref{unisep-bosonic-n4}: $n \geq 4$}

We have already proven that all i.i.d.~pure states are USEP in Section \ref{section-unisep}.
Here we show the converse: any USEP state is an i.i.d.~state. As mentioned earlier, this
proof is composed of three steps.

First, we consider the case $n=4$.
Denoting a basis of $\mathbb{C}^4$ by $\{\ket{e_i}\}_{i=1}^4$, any state $\ket{\Psi}\in \mathcal{H}_{S}$ can be written as
\begin{eqnarray}
\label{unisep-psi-basic-notation}
 \ket{\Psi} = \sum_{l_1, l_2, l_3, l_4}\Psi_{l_1,l_2,l_3,l_4}
		\mathcal{S}\ket{e_1}^{\otimes l_1}\ket{e_2}^{\otimes l_2}\ket{e_3}^{\otimes l_3}\ket{e_4}^{\otimes l_4},\nonumber
\end{eqnarray}
where $\Psi_{l_1,l_2,l_3,l_4}\in \mathbb{C}$ and the summation is subject to the condition,
\be
l_1+l_2+l_3+l_4=N.
\label{conservation}
\ee
We then wish to show:
\begin{lemma}
 \label{unisep-lemma-nonzero-main}
Given a USEP state $\ket{\Psi}$, there exists a basis of $\mathbb{C}^4$ such that
 $\Psi_{l_1,l_2,l_3,l_4}\neq 0$ for all $l_1,l_2,l_3,l_4$ in Eq. (\ref{unisep-psi-basic-notation}).
\end{lemma}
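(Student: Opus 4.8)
The plan is to read the coefficient $\Psi_{l_1,l_2,l_3,l_4}$ as an inner product and then argue that a \emph{generic} choice of orthonormal basis makes all such coefficients nonzero simultaneously. Concretely, fix the reference basis $\{\ket{e_i}\}$ and let $U\in U(4)$ act, producing the rotated frame $\ket{f_i}=U\ket{e_i}$. Because the occupation states $\ket{\chi_l(U)}:=\mathcal{S}\ket{f_1}^{\otimes l_1}\ket{f_2}^{\otimes l_2}\ket{f_3}^{\otimes l_3}\ket{f_4}^{\otimes l_4}$ with fixed frame and varying $l=(l_1,\dots,l_4)$ are mutually orthogonal (they carry different occupation numbers in the orthogonal one-particle levels), the coefficient of $\ket{\Psi}$ on the level $l$ in the rotated basis equals, up to a nonzero normalization, $c_l(U):=\braket{\chi_l(U)|\Psi}$. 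Thus the lemma reduces to finding a single $U$ with $c_l(U)\neq0$ for every admissible $l$ satisfying $l_1+\cdots+l_4=N$.

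First I would observe that each $c_l$ is a real-analytic function of $U$ on the connected manifold $U(4)$ (in fact polynomial in the entries of $U$ and their conjugates), and that there are only finitely many admissible indices $l$. Hence, if no single $c_l$ vanishes identically on $U(4)$, then each zero set $\{c_l=0\}$ is a proper real-analytic subvariety and so has measure zero; their finite union is still measure zero, leaving a dense, full-measure set of $U$ on which all coefficients are nonzero at once. Any such $U$ then supplies the desired basis $\{\ket{f_i}\}$.

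The crux — the one genuinely nontrivial step — is to show that $c_l\not\equiv0$ for each fixed $l$. Here I would invoke the irreducibility of $\mathrm{Sym}^N(\C^4)=\mathcal{H}_{\mathcal S}$ as a representation of $U(4)$. The linear span of the orbit $\{\ket{\chi_l(U)}:U\in U(4)\}$ is a $U(4)$-invariant subspace; it is nonzero since it contains $\ket{\chi_l(\mathbbm{1})}=\mathcal{S}\ket{e_1}^{\otimes l_1}\ket{e_2}^{\otimes l_2}\ket{e_3}^{\otimes l_3}\ket{e_4}^{\otimes l_4}\neq0$, so by irreducibility it is all of $\mathcal{H}_{\mathcal S}$. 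Consequently, if $c_l(U)=\braket{\chi_l(U)|\Psi}$ vanished for every $U$, then $\ket{\Psi}$ would be orthogonal to the entire space and hence $\ket{\Psi}=0$. A USEP state is in particular a nonzero (normalized) state, so this is impossible and $c_l\not\equiv0$ as required. I should remark that the argument uses only $\ket{\Psi}\neq0$, so the nonvanishing-basis property in fact holds for every nonzero symmetric state; the USEP hypothesis is genuinely exploited only in the subsequent lemmata.

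I expect the subtlety to lie entirely in justifying the two analytic inputs — that the $c_l(U)$ are real-analytic with measure-zero zero sets, and that $\mathrm{Sym}^N(\C^4)$ is $U(4)$-irreducible — both of which are standard; the combinatorial bookkeeping (finitely many $l$, orthogonality of the occupation states, nonzero normalizations) is routine. A more self-contained alternative, avoiding representation theory, would be to recover each $c_l(U)$ as a mixed holomorphic (polarization) derivative of the coherent-state overlap $\braket{f^{\otimes N}|\Psi}$ evaluated at $f=\sum_k t_k f_k$, and then use that these overlaps already determine $\ket{\Psi}$; but I would prefer the orbit/irreducibility route for brevity.
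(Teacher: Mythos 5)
Your proof is correct, and it takes a genuinely different route from the paper's. The paper never leaves the USEP framework: universal separability (its Lemma~\ref{unisep-bosonic-n4-lemma-ab}) first gives the two factorizations $\Psi_{l_1,l_2,l_3,l_4}=a_{l_1,l_2}b_{l_3,l_4}$ and $\Psi_{l_1,l_2,l_3,l_4}=c_{l_1,l_3}d_{l_2,l_4}$ of Eq.~(\ref{unisep-lemma-nonzero-cd}), associated with the pairings ${\rm span}\{\ket{e_1},\ket{e_2}\}\oplus{\rm span}\{\ket{e_3},\ket{e_4}\}$ and ${\rm span}\{\ket{e_1},\ket{e_3}\}\oplus{\rm span}\{\ket{e_2},\ket{e_4}\}$, and then applies a genericity lemma for ${\rm U}(2)$ rotations inside a two-dimensional block (Lemma~\ref{unisep-lemma-nonzero-basic}) to the factors $a,b$ and then $c,d$, so that all the products, i.e.\ all $\Psi_{l_1,l_2,l_3,l_4}$, become nonzero. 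You instead run a single genericity argument over all of ${\rm U}(4)$: each $c_l(U)=\braket{\chi_l(U)|\Psi}$ is real-analytic in $U$; it cannot vanish identically, because the span of the orbit $\{U^{\otimes N}\ket{\chi_l(\mathbbm{1})}\}$ is a nonzero invariant subspace of the irreducible ${\rm U}(4)$-representation $[(\C^4)^{\otimes N}]_{\mathcal{S}}$; and there are finitely many admissible $l$, so a generic $U$ makes all coefficients nonzero simultaneously. Both of your analytic inputs are indeed standard (the zero set of a not-identically-zero real-analytic function on the connected group ${\rm U}(4)$ is Haar-null; the symmetric power is the irreducible highest-weight $(N,0,0,0)$ module, and irreducibility survives restriction from ${\rm GL}(4,\C)$ to its compact form), and the only identification that needs stating is the one you make: the expansion coefficient and the overlap $c_l(U)$ differ by the nonzero factor $\braket{\chi_l(U)|\chi_l(U)}=\prod_i l_i!/N!$, since the occupation states in a fixed orthonormal frame are mutually orthogonal. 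What your route buys: brevity, no reliance on the factorization machinery, and a strictly stronger conclusion --- only $\ket{\Psi}\neq0$ is used, so every nonzero symmetric state admits such a basis and the lemma is exposed as a pure genericity fact, with USEP genuinely needed only in the later steps (exactly as you remark). What the paper's route buys: elementarity --- it uses the same dimension-counting genericity you do, but only in the two complex parameters of a ${\rm U}(2)$ block, with no representation theory --- and its basis changes stay confined to the two-dimensional blocks whose structure is then exploited in Lemma~\ref{symmetry-general-lemma-binominal-12-34} and Proposition~\ref{n4iid}.
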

To prove this, we need the following two lemmata.
\begin{lemma}
 \label{unisep-bosonic-n4-lemma-ab}
If $\ket{\Psi}$ is USEP, there exist two complex numbers $a_{l_1,l_2}$ and $b_{l_3,l_4}$ which satisfy
 \begin{equation*}
	\Psi_{l_1,l_2,l_3,l_4} = a_{l_1,l_2}b_{l_3,l_4}
 \end{equation*}
\end{lemma}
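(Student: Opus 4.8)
The plan is to exploit the universal separability of $\ket{\Psi}$ with a single, carefully chosen measurement setup $(\Gamma, V)$ that splits the four basis states into two pairs, and then to read off the claimed product structure $\Psi_{l_1,l_2,l_3,l_4} = a_{l_1,l_2}b_{l_3,l_4}$ directly from the separability of the projected state. Concretely, I would take the bipartition $\Gamma = \{\Gamma_1, \Gamma_2\}$ with, say, $\Gamma_1 = \{1,2,\ldots,N_1\}$ and $\Gamma_2$ its complement, paired with the subspace decomposition $V_1 = \mathrm{span}\{\ket{e_1}, \ket{e_2}\}$ and $V_2 = \mathrm{span}\{\ket{e_3}, \ket{e_4}\}$. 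The key observation is that $V_1 \oplus V_2 = \mathbb{C}^4$ exhausts the whole one-particle space, so the projection onto $\mathcal{H}_{\mathcal{S}}(\Gamma, V)$ keeps \emph{every} term in the expansion of $\ket{\Psi}$ for which exactly $|\Gamma_1|$ particles lie in $V_1$ and $|\Gamma_2|$ in $V_2$; that is, the terms with $l_1 + l_2 = |\Gamma_1|$ and $l_3 + l_4 = |\Gamma_2|$ survive, while all others are annihilated.

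The next step is to translate the surviving piece into $\mathcal{H}^{\text{mes}}(\Gamma, V)$ using the unitary map~(\ref{imap}). Under this map, a symmetrized state with a fixed occupation pattern across the two blocks corresponds to a genuine tensor product: the $V_1$-part with occupation $(l_1, l_2)$ becomes a vector in $\mathcal{H}_{\mathcal{S}}(\Gamma_1, V_1)$ labeled by $(l_1, l_2)$, and similarly the $V_2$-part labeled by $(l_3, l_4)$. Writing $\ket{u_{l_1,l_2}}$ and $\ket{w_{l_3,l_4}}$ for these normalized block states, the mapped state $\ket{\Psi^{\text{mes}}}$ takes the form $\sum c_{l_1,l_2} d_{l_3,l_4}\,\ket{u_{l_1,l_2}}\otimes\ket{w_{l_3,l_4}}$, where the coefficients are the original $\Psi_{l_1,l_2,l_3,l_4}$ up to combinatorial normalization factors that I would absorb. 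Because $\{\ket{u_{l_1,l_2}}\}$ and $\{\ket{w_{l_3,l_4}}\}$ are each orthonormal families, the coefficient array $\Psi_{l_1,l_2,l_3,l_4}$, viewed as a matrix indexed by the pair $(l_1,l_2)$ on one side and $(l_3,l_4)$ on the other, is precisely the matrix of Schmidt coefficients of $\ket{\Psi^{\text{mes}}}$ across the bipartition.

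The separability of $\ket{\Psi^{\text{mes}}}$, guaranteed by the USEP hypothesis, then says exactly that this matrix has rank one, which is equivalent to the factorization $\Psi_{l_1,l_2,l_3,l_4} = a_{l_1,l_2}\,b_{l_3,l_4}$ for some complex numbers $a_{l_1,l_2}$ and $b_{l_3,l_4}$. This is the heart of the argument: a bipartite pure state is a product state if and only if its coefficient matrix factorizes as an outer product. I would restore the combinatorial factors at the end, noting that they themselves factor across the two blocks (each multinomial weight splits into a $V_1$-factor times a $V_2$-factor) and hence can be folded into $a$ and $b$ without affecting the conclusion.

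The main obstacle I anticipate is bookkeeping rather than conceptual: one must verify carefully that the symmetrizer $\mathcal{S}$ acting on the full $N$-particle space restricts cleanly to a product of block-symmetrizers under the map~(\ref{imap}), so that a state with definite occupation numbers in each block really does map to a simple tensor product with the correct normalization. The orthogonality $V_1 \perp V_2$ is what makes this work, and I would lean on the already-established unitarity of $\mathcal{X}:\mathcal{H}^{\text{mes}}(\Gamma,V)\to\mathcal{H}_{\mathcal{X}}(\Gamma,V)$ to avoid recomputing these normalizations from scratch. A secondary subtlety is that the factorization produced by rank-one separability is unique only up to a scalar (one may rescale $a \mapsto \lambda a$, $b \mapsto \lambda^{-1} b$), but since the lemma only asserts \emph{existence} of such $a$ and $b$, this ambiguity is harmless.
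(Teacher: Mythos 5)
Your core mechanism is sound and is essentially the paper's own: project onto $\mathcal{H}_{\mathcal{S}}(\Gamma,V)$, transfer to $\mathcal{H}^{\text{mes}}(\Gamma,V)$, and use the fact that a bipartite pure state is a product state if and only if its coefficient matrix, indexed by $(l_1,l_2)$ on one side and $(l_3,l_4)$ on the other, has rank one; the combinatorial normalizations indeed factor across the two blocks and are harmless. The gap is your insistence on a \emph{single} measurement setup. As you yourself observe, the projection with $|\Gamma_1|=N_1$ annihilates every component of $\ket{\Psi}$ except those with $l_1+l_2=N_1$ and $l_3+l_4=N-N_1$. Separability of that one projected state therefore constrains only the coefficients in that single occupation sector: you obtain $\Psi_{l_1,l_2,l_3,l_4}=a_{l_1,l_2}b_{l_3,l_4}$ only for $l_1+l_2=N_1$, and you learn nothing whatsoever about the coefficients with any other value of $l_1+l_2$. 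Since the lemma asserts the factorization for \emph{all} quadruples with $l_1+l_2+l_3+l_4=N$, one setup cannot suffice; your conclusion silently extends a sector-wise statement to the whole coefficient array.

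The repair is exactly what the paper does: run your argument for every partition size, i.e., for the family $\Gamma(i)$ with $\Gamma_1(i)=\{1,\ldots,i\}$, $i=1,\ldots,N-1$, keeping $V$ fixed. The orthogonal decomposition $\mathcal{H}_{\mathcal{S}} = V_1^{\otimes N}\oplus\bigoplus_{i=1}^{N-1}\mathcal{H}_{\mathcal{S}}(\Gamma(i),V)\oplus V_2^{\otimes N}$ guarantees that these projections, together with the two extreme sectors $l_1+l_2\in\{0,N\}$ (where the factorization is trivial, one factor being set to $1$), recover all of $\ket{\Psi}$. USEP then gives rank-one factors $a(i)_{l_1,l_2}\,b(i)_{l_3,l_4}$ sector by sector, and these assemble into a single global factorization because the sector label is itself a function of the indices: $i=l_1+l_2$ is determined by $(l_1,l_2)$, and $i=N-l_3-l_4$ by $(l_3,l_4)$, so $a_{l_1,l_2}:=a(l_1+l_2)_{l_1,l_2}$ and $b_{l_3,l_4}:=b(N-l_3-l_4)_{l_3,l_4}$ are well defined and satisfy the claimed identity for every admissible quadruple. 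With this loop over $i$ added, your rank-one argument within each sector becomes a valid, and slightly more explicit, version of the paper's proof.
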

\begin{proof}
We choose $V= \{V_1,V_2\}$ such that
\be
V_1 = \text{span}\{\ket{e_1},\ket{e_2}\}
\quad
{\rm and}
\quad
 V_2 = {V_1}^\perp
 \label{unisep-lemma-ab-V1V2}
\ee
 and $\Gamma(i)= \{\Gamma_1(i),\Gamma_2(i)\}$ such that
\be
	\Gamma_1(i)= \{1, \cdots, i\}
	\quad
	{\rm and}
	\quad
	\Gamma_2(i)= \{i+1, \cdots,N\}
 \label{unisep-lemma-ab-Gamma-i}
\ee
for $i=1, 2, \ldots, N-1$.
Since $\ket{\Psi}$ is USEP, its observable part $\ket{\Psi(\Gamma(i),V)}$ must be separable.
It follows that there exist $a(i)_{l_1,l_2}$ and $b(i)_{l_3,l_4}$ with which we have
\begin{eqnarray}
 &&\ket{\Psi(\Gamma(i),V)}\nonumber\\
 &&=\sum_{l_1, l_2, l_3, l_4}a(i)_{l_1,l_2}b(i)_{l_3,l_4}\mathcal{S}\ket{e_1}^{\otimes l_1}\ket{e_2}^{\otimes l_2}\ket{e_3}^{\otimes l_3}\ket{e_4}^{\otimes l_4},\nonumber
\end{eqnarray}
where the summation is subject to the conditions,
\be
l_1+l_2=i,
\qquad
{\rm and}
\qquad
l_3+l_4=N-i.
\label{sumcond}
\ee
On the other hand, since $\mathbb{C}^4=V_1\oplus V_2$, we have
\begin{equation}
	\mathcal{H}_{\mathcal{S}} =\left[\(V_1\oplus V_2\)^{\otimes N}\right]_{\mathcal{S}}
	= V_1^{\otimes N} \oplus \bigoplus_{i=1}^{N-1} \mathcal{H}_{\mathcal{S}}(\Gamma(i),V)
	\oplus V_2^{\otimes N}.\nonumber
\end{equation}
This means that any USEP state $\ket{\Psi}\in \mathcal{H}_{\mathcal{S}}$ takes the form
\begin{eqnarray}
 \label{unisep-psi-basic-notation2}
 \ket{\Psi} &=&
 \ket{\Psi'}+\sum_{i=1}^{N-1}\ket{\Psi(\Gamma(i),V)}+\ket{\Psi''},
 \end{eqnarray}
 where
 \be
 &&\ket{\Psi^\prime}=\sum_{l_1+l_2=N} a(N)_{l_1,l_2} {\cal S}\ket{e_1}^{\otimes l_1}\ket{e_2}^{\otimes l_2},\nonumber\\
 &&\ket{\Psi^{\prime\prime}}=\sum_{l_3+l_4=N} b(0)_{l_3,l_4} {\cal S}\ket{e_3}^{\otimes l_3}\ket{e_4}^{\otimes l_4}.\nonumber
\ee
Note that the states appearing in the RHS of Eq.~(\ref{unisep-psi-basic-notation2}) are not normalized, and hereafter we shall not necessarily be concerned with
normalization for simplicity.

By introducing formal coefficients
\be
b(N)_{l_3,l_4}=a(0)_{l_1,l_2}=1,\nonumber
\ee
Eq.~(\ref{unisep-psi-basic-notation2}) can be rewritten as 
 \begin{eqnarray}
 \ket{\Psi}=\sum_{i=0}^N\sum_{l_1, l_2, l_3, l_4}&&a(i)_{l_1,l_2}b(i)_{l_3,l_4}\nonumber\\
 &&\times \, \mathcal{S}\ket{e_1}^{\otimes l_1}\ket{e_2}^{\otimes l_2}\ket{e_3}^{\otimes l_3}\ket{e_4}^{\otimes l_4}\nonumber
\end{eqnarray}
with the summation condition (\ref{sumcond}),
which implies the statement of Lemma  \ref{unisep-bosonic-n4-lemma-ab}.
\end{proof}

When we change Eq.(\ref{unisep-lemma-ab-V1V2}) in Lemma \ref{unisep-bosonic-n4-lemma-ab} as
\begin{equation}
 V_1 \rightarrow \text{span}\{\ket{e_1},\ket{e_3}\},\;
 V_2 \rightarrow \text{span}\{\ket{e_2},\ket{e_4}\},\nonumber
\end{equation}
we obtain $c_{l_1,l_3}$ and $d_{l_2,l_4}$, such that
\begin{equation}
\label{unisep-lemma-nonzero-cd}
 \Psi_{l_1,l_2,l_3,l_4} = c_{l_1,l_3}d_{l_2,l_4}.
\end{equation}

The next lemma ensures that $\Psi_{l_1,l_2,l_3,l_4}\neq 0$.
\begin{lemma}
\label{unisep-lemma-nonzero-basic}
Consider the set of states $\ket{\Phi_p}\neq0$ for $p = 1, \ldots, N$ given by
 \begin{equation*}
	\ket{\Phi_p} = \sum_{l_1+l_2=p} a_{l_1,l_2} \,\mathcal{S}\ket{e_1}^{\otimes l_1}\ket{e_2}^{\otimes l_2}.
 \end{equation*}
Then there exists a unitary transformation $U\in {\rm U}(2)\subset{\rm U}(4)$ such that the states $\ket{\Phi_p}$
become
\begin{equation*}
 \ket{\Phi_p} = \sum_{l_1+l_2=p} a'_{l_1,l_2} \,\mathcal{S}\ket{e'_1}^{\otimes l_1}\ket{e'_2}^{\otimes l_2},
 \quad
a_{l_1,l_2}^\prime\neq0,
\,\,\,
\forall l_1, l_2,\nonumber
\end{equation*}
with
\begin{equation*}
 \ket{e'_1}=U\ket{e_1},
 \qquad
 \ket{e'_2} = U\ket{e_2}
\end{equation*}
and
\be
{\rm span}\{\ket{e_1},\ket{e_2}\}={\rm span}\{\ket{e_1^\prime}, \ket{e_2^\prime}\}.\nonumber
\ee
\end{lemma}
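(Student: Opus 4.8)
The plan is to recast the two‑dimensional symmetric states in terms of the spin‑$p/2$ representation of $\mathrm{SU}(2)$ and to exploit its irreducibility. Writing $\ket{\chi_{l_1,l_2}} := \mathcal{S}\ket{e_1}^{\ot l_1}\ket{e_2}^{\ot l_2}$ with $l_1+l_2=p$, these mutually orthogonal vectors form a basis of the $p$‑particle symmetric subspace of $\mathrm{span}\{\ket{e_1},\ket{e_2}\}$, which is nothing but the spin‑$p/2$ irreducible representation of $\mathrm{SU}(2)$. A change of basis $\ket{e_i}\mapsto\ket{e_i'}=U\ket{e_i}$ with $U\in\mathrm{U}(2)$, embedded block‑diagonally in $\mathrm{U}(4)$ so that it fixes $\mathrm{span}\{\ket{e_1},\ket{e_2}\}$, acts on this subspace precisely by the representation operator $\Pi_p(U)$, since $\mathcal{S}$ commutes with $U^{\ot p}$; thus $\mathcal{S}\ket{e_1'}^{\ot l_1}\ket{e_2'}^{\ot l_2}=\Pi_p(U)\ket{\chi_{l_1,l_2}}$. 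Because $\Pi_p(U)$ is unitary, these transformed vectors stay orthogonal and keep their norms, so the transformed coefficient is, up to a fixed nonzero factor, $a'_{l_1,l_2}(U)\propto \langle \Pi_p(U)\chi_{l_1,l_2} | \Phi_p\rangle$. First I would establish this dictionary and note that each such coefficient is a real‑analytic (indeed polynomial in the entries of $U$ and $\bar U$) function on the connected compact group $\mathrm{U}(2)$.

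The heart of the argument is to show that, for every fixed $p\le N$ and every admissible $(l_1,l_2)$, the function $U\mapsto a'_{l_1,l_2}(U)$ is not identically zero. I would argue by contradiction using irreducibility: if it vanished for all $U\in\mathrm{U}(2)$, hence for all $R\in\mathrm{SU}(2)$, then $\ket{\Phi_p}$ would be orthogonal to every vector of the orbit $\{\Pi_p(R)\ket{\chi_{l_1,l_2}}\}_{R}$. The closed span of this orbit is $\Pi_p$‑invariant and nonzero, as it contains $\ket{\chi_{l_1,l_2}}\neq0$, so by irreducibility of the spin‑$p/2$ representation it is the whole symmetric subspace; hence $\ket{\Phi_p}=0$, contradicting the hypothesis $\ket{\Phi_p}\neq0$.

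The remaining step is routine analytic bookkeeping that also delivers the required simultaneity over all $p$. Since each $a'_{l_1,l_2}(\cdot)$ is a not‑identically‑zero real‑analytic function on the connected manifold $\mathrm{U}(2)$, its zero set is nowhere dense and of measure zero. There are only finitely many triples $(p,l_1,l_2)$ with $1\le p\le N$ and $l_1+l_2=p$, so the union of the corresponding zero sets is still nowhere dense, its complement is dense, and any $U$ in the complement simultaneously renders every coefficient $a'_{l_1,l_2}$ nonzero for all $p=1,\ldots,N$. Choosing such a $U$ and setting $\ket{e_i'}=U\ket{e_i}$ then proves the lemma, the relation $\mathrm{span}\{\ket{e_1'},\ket{e_2'}\}=\mathrm{span}\{\ket{e_1},\ket{e_2}\}$ holding automatically because $U$ acts within that subspace.

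I expect the main obstacle to be the irreducibility/orbit‑spanning step, specifically making the identification of the two‑dimensional symmetric subspace with the spin‑$p/2$ irreducible representation and the fact that basis change acts as $\Pi_p(U)$ fully watertight; once that is in place, the non‑vanishing of each coefficient and the genericity packaging are standard. Should one prefer to avoid representation theory, an alternative is to view $\ket{\Phi_p}$ as a nonzero homogeneous polynomial of degree $p$ in two variables, to observe that each transformed coefficient extends to a regular function on $\mathrm{GL}(2,\C)$ that is not identically zero, for instance by composing two shears so as to first create a nonzero top coefficient and then spread it across all monomials, and finally to invoke the Zariski density of $\mathrm{U}(2)$ in $\mathrm{GL}(2,\C)$ to transfer the non‑vanishing back to $\mathrm{U}(2)$; the analytic packaging of the last paragraph is then unchanged.
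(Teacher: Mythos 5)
Your proof is correct, and while its overall skeleton matches the paper's (make each transformed coefficient a nice function of $U$, then argue that a generic $U$ avoids all the finitely many zero sets at once), it differs genuinely at the crucial step. The paper works with an explicit parameterization of $U^{-1}$ by $\xi,\eta$, expands each $a'_{l_1,l_2}$ as a polynomial in $\xi,\xi^*,\eta,\eta^*$, and then asserts that the zero set of each such polynomial is lower-dimensional, so the finite union cannot cover the parameter space; the step that each coefficient function is \emph{not identically zero} is left implicit there, resting only on the existence of a single nonzero $a_{l_1,l_2}$. Your argument supplies exactly this missing justification in a clean, computation-free way: identifying the symmetric subspace of $\bigl(\mathrm{span}\{\ket{e_1},\ket{e_2}\}\bigr)^{\otimes p}$ with the spin-$p/2$ irreducible representation of $\mathrm{SU}(2)$, noting that the basis change acts as $\Pi_p(U)$ because $\mathcal{S}$ commutes with $U^{\otimes p}$, and observing that if $a'_{l_1,l_2}(U)\propto\langle \Pi_p(U)\chi_{l_1,l_2}|\Phi_p\rangle$ vanished identically, then $\ket{\Phi_p}$ would be orthogonal to the $\Pi_p$-invariant span of the orbit of $\ket{\chi_{l_1,l_2}}$, which by irreducibility is the whole space, forcing $\ket{\Phi_p}=0$. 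Your genericity packaging (real-analytic, not identically zero, connected compact group, finite union of nowhere-dense zero sets) is a rigorous version of the paper's dimension-counting, and it correctly delivers the simultaneity over all $p=1,\ldots,N$ that the lemma requires. In short: the paper's route is elementary and explicit but glosses over the non-triviality of the coefficient polynomials; yours costs a standard fact from representation theory (or, in your alternative, Zariski density of $\mathrm{U}(2)$ in $\mathrm{GL}(2,\mathbb{C})$ applied to binary forms) and in exchange closes that gap.
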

\begin{proof}
Let us parameterize $U$ as
\begin{equation*}
U^{-1} =
\begin{pmatrix}
 \xi& \eta\\
 -\eta^*& \xi^*
\end{pmatrix},
\qquad
\xi, \eta\in \mathbb{C},
\quad
 |\xi|^2+|\eta|^2=1,
\end{equation*}
and express $a'_{l_1,l_2}$ explicitly in terms of the parameters as
\begin{eqnarray}
 a'_{t-(k+l),k+l} =  \sum_{l_1,l_2} a_{l_1,l_2}
{l_1\choose k}
{l_2 \choose l}
\xi^k(\xi^*)^{l_2}\eta^l(-\eta^*)^{l_1-k},
\nonumber
\end{eqnarray}
where 
${i\choose j}$ is the binomial coefficient and $0\le k\le l_1$, $0\le l\le l_2$.
Since $\ket{\Psi_p} \neq 0$, there exists a doublet $(l_1, l_2)$ such that $a_{l_1,l_2}\neq 0$.
Then we may interpret $a'_{l_1,l_2}$ as a polynomial of $\xi, \xi^*, \eta, \eta^*$ with a finite degree.
The dimension of the parameter space of $\xi$ and $\eta$ which satisfies  $a'_{l_1,l_2}= 0$ is less than the original one.
Because of their dimensionality, the union of the parameter spaces with $a'_{l_1,l_2}= 0$ cannot cover the original one.
This means that  there always exists a pair such that $a'_{l_1,l_2}\neq 0$ for all $l_1, l_2$ simultaneously. 
\end{proof}

Now we have:

\noindent
\begin{proof}[Proof of Lemma \ref{unisep-lemma-nonzero-main}]
Since $\ket{\Psi}$ is nonzero, there exists $\bar{l}_1,\bar{l}_2, \bar{l}_3,\bar{l}_4$ such that $\Psi_{\bar{l}_1,\bar{l}_2, \bar{l}_3,\bar{l}_4} \neq 0$. 
Then, from Lemma~\ref{unisep-bosonic-n4-lemma-ab} and the universal separability of $\ket{\Psi}$,
we find $a_{\bar{l}_1,\bar{l}_2}\neq 0$ and $b_{\bar{l}_3,\bar{l}_4}\neq 0$.
According to Lemma \ref{unisep-lemma-nonzero-basic}, we can choose a basis in which the following coefficients become non-vanishing,
\begin{equation}
 \label{unisep-lemma-cond-kl}
 a^\prime_{L_1-k,k}\neq0,
 \qquad
 b^\prime_{L_2-l,l}\neq0,\nonumber
 \end{equation}
 where we have introduced $L_1=\bar{l}_1+\bar{l}_2$ and $L_2=\bar{l}_3+\bar{l}_4$. Note that
 $
 0\leq k \leq L_1
 $
 and
 $
 0\leq l \leq L_2.
$

Further, from Eq.~(\ref{unisep-lemma-nonzero-cd}),
we find
\be
c_{L_1-k, L_2-l}^\prime d_{k,l}^\prime= a^\prime_{L_1-k,k} b^\prime_{L_2-l,l}\neq0,\nonumber
\ee
which means that for all $k$ and $l$, there exists at least one nonzero term in the following state;
\begin{eqnarray}
\sum_{k,l} c_{L_1-k,L_2-l}^\prime d_{k,l}^\prime {\cal S}\ket{e_1^\prime}^{\otimes (L_1-k)}\ket{e_2^\prime}^{\ot k}\ket{e_3^\prime}^{\otimes (L_2-l)}\ket{e_4^\prime}^{\ot l}. \nonumber
\end{eqnarray}
Recall that Lemma \ref{unisep-lemma-nonzero-basic} assures us a basis in which we have
\begin{eqnarray}
 c_{N-k-l-p,p}^{\prime\prime}\neq0,
 \qquad
 d_{k+l-q,q}^{\prime\prime}\neq 0. \nonumber
\end{eqnarray}
Since there are three independent parameters $k, l, m$,  the four indices of $c''_{l_1, l_3}\neq0$ and  $d''_{l_2,l_4}\neq0$ freely run from 0 to $N$ under the condition (\ref{conservation}).
Thus, by working with this basis, all components $\Psi''_{l_1,l_2,l_3,l_4}$ are nonzero, which completes the proof.
\end{proof}

According to Lemma  \ref{unisep-lemma-nonzero-main},
no generality is lost by assuming $\Psi_{l_1,l_2,l_3,l_4}\neq 0$ for all $\{l_i\}_{i=1}^4$.
Now, we proceed to the second step, where we shall show the following proposition.
\begin{proposition}
 For $n=4$, a bosonic pure state is USEP if and only if it is an i.i.d.~pure state.
 \label{n4iid}
\end{proposition}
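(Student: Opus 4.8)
The plan is to prove only the nontrivial implication, that a USEP state with $n=4$ must be i.i.d., since the converse was already established in Section~\ref{section-unisep}. By Lemma~\ref{unisep-lemma-nonzero-main} I may fix a basis $\{\ket{e_i}\}_{i=1}^{4}$ in which every coefficient $\Psi_{l_1,l_2,l_3,l_4}$ is nonzero, and then Lemma~\ref{unisep-bosonic-n4-lemma-ab} together with the relation (\ref{unisep-lemma-nonzero-cd}) supply two simultaneous factorizations; splitting $V_1=\mathrm{span}\{\ket{e_1},\ket{e_4}\}$ gives a third. Thus
\begin{equation}
\Psi_{l_1,l_2,l_3,l_4}=a_{l_1,l_2}b_{l_3,l_4}=c_{l_1,l_3}d_{l_2,l_4}=f_{l_1,l_4}g_{l_2,l_3}.\nonumber
\end{equation}
The target is to show these force the monomial form $\Psi_{l_1,l_2,l_3,l_4}\propto \frac{N!}{l_1!l_2!l_3!l_4!}\,\phi_1^{l_1}\phi_2^{l_2}\phi_3^{l_3}\phi_4^{l_4}$, which is exactly the expansion of $\ket{\phi}^{\otimes N}$ with $\ket{\phi}=\sum_j\phi_j\ket{e_j}$. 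Equivalently, introducing the generating polynomial $F(x)=\sum_l\Psi_{l_1,l_2,l_3,l_4}\,x_1^{l_1}x_2^{l_2}x_3^{l_3}x_4^{l_4}$, i.i.d.\ is the statement that $F$ is the $N$-th power of a single linear form $\ell(x)=\sum_j\phi_j x_j$.

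The conceptual reformulation I would use is that separability across a two-against-two cut is exactly a rank-one condition on the corresponding flattening of the coefficient array. In this language the factorization $\Psi=a_{l_1,l_2}b_{l_3,l_4}$ is the rank-one condition for the coordinate split $\{e_1,e_2\}|\{e_3,e_4\}$, and the other two factorizations are its analogues for the remaining coordinate splits. The essential extra input, which the three coordinate factorizations alone do not capture, is that universal separability demands the same rank-one condition for \emph{every} orthogonal split $V_1\perp V_2$ of $\mathbb{C}^4$ into two planes, i.e.\ in every basis obtained by a rotation $U\in\mathrm{U}(4)$: for each such split one chooses $\Gamma$ partitioning the particles and $V=\{V_1,V_2\}$, and separability of $\ket{\Psi(\Gamma,V)}$ in $\mathcal{H}_{\mathcal S}(\Gamma,V)$ is the vanishing of the off-block flattening rank beyond one. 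I would therefore not work only with the coordinate factorizations but exploit this full rotated family.

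The clean model is $N=2$, which I would treat explicitly. There the coefficients form a complex symmetric matrix $M$, and each split requires the $2\times2$ submatrix $\mathcal{M}_{ij}=\langle a_i|M|\,\overline{b_j}\rangle$ between $V_1=\mathrm{span}\{a_1,a_2\}$ and $V_2=\mathrm{span}\{b_1,b_2\}$ to have rank one. Writing the Autonne--Takagi factorization $M=U\Sigma U^{\mathrm{T}}$ with $U$ unitary, columns $u_k$ Hermitian-orthonormal, and $\Sigma=\mathrm{diag}(\sigma_k)\ge0$, one has $\mathcal{M}_{ij}=\sum_k\sigma_k\langle a_i|u_k\rangle\langle b_j|u_k\rangle$; if $\mathrm{rank}\,M\ge2$ then taking the orthonormal set $a_i=(u_i+u_{i+2})/\sqrt2$, $b_i=(u_i-u_{i+2})/\sqrt2$ for $i=1,2$ yields $\mathcal{M}=\tfrac12\,\mathrm{diag}(\sigma_1,\sigma_2)$, which has rank two and contradicts USEP. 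Hence $M=\phi\phi^{\mathrm{T}}$, i.e.\ i.i.d. For general $N$ the coordinate factorizations determine $\Psi$ on the interior of the occupation simplex up to the two-mode marginal data, while the rotated-split rank-one conditions pin that marginal data down; combining them forces $F=\ell^N$, equivalently that the generating polynomial is a pure power of a linear form. I expect the main obstacle to be precisely this last upgrade from $N=2$ to general $N$: organizing the continuum of rotated-split rank-one conditions into a finite set that, fed into the recursion relations implied by the three factorizations, integrates to one global linear form $\ell$, all while keeping the nonvanishing guaranteed by Lemma~\ref{unisep-lemma-nonzero-main} under control.
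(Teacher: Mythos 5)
Your overall strategy is the right one, and you correctly identify the key point that the three coordinate factorizations alone cannot suffice (indeed, $\Psi_{l_1,l_2,l_3,l_4}\equiv 1$ satisfies all three yet is not i.i.d.\ for $N\ge 2$, since i.i.d.\ coefficients carry multinomial factors), so the rotated splits must be exploited. But your proposal has a genuine gap exactly where you yourself locate it: the passage from $N=2$ to general $N$ is never carried out, only described as ``combining'' the coordinate recursions with the rotated-split rank-one conditions. That combination is the entire technical content of the proposition, and it is what the paper's proof actually supplies in Lemma~\ref{symmetry-general-lemma-binominal-12-34}: one perturbs the basis by \emph{infinitesimal} ${\rm U}(4)$ rotations mixing $e_1$ with $e_3$ (and $e_2$ with $e_4$), demands that the factorization of Lemma~\ref{unisep-bosonic-n4-lemma-ab} survive to first order in $\epsilon$, and extracts from the $l_3,l_4$-independence of coefficient ratios the recursion relations
\begin{equation}
A_{l_1,l_2}=A_{l_1',l_2'},\qquad \frac{l_1+1}{A_{l_1+1,l_2}}=\frac{l_1'+1}{A_{l_1'+1,l_2'}},\qquad A_{l_1,l_2}=\frac{a_{l_1-1,l_2}}{a_{l_1,l_2}},\nonumber
\end{equation}
whose solution is the binomial form $a_{l_1,l_2}=a_{l_1+l_2,0}\binom{l_1+l_2}{l_2}(a_{0,1}/a_{1,0})^{l_2}$, i.e.\ each two-mode block collapses to a power of a single linear form; repeating with $e_2\leftrightarrow e_3$ exchanged and equating the two expressions for $\Psi_{l_1,0,l_3,0}$ then glues the blocks into one global $N$-th power. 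Without some such derivation (infinitesimal or otherwise), your claim that the rotated conditions ``pin down the marginal data'' and ``integrate to one global linear form $\ell$'' is an assertion, not a proof.

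There is also a correctable flaw in the one piece you do work out, the $N=2$ Takagi argument: with $a_i=(u_i+u_{i+2})/\sqrt2$, $b_i=(u_i-u_{i+2})/\sqrt2$ one gets $\mathcal{M}=\tfrac12\,\mathrm{diag}(\sigma_1-\sigma_3,\,\sigma_2-\sigma_4)$, not $\tfrac12\,\mathrm{diag}(\sigma_1,\sigma_2)$; when the singular values are degenerate (e.g.\ $\sigma_1=\sigma_2=\sigma_3=\sigma_4>0$) this matrix can even vanish, and no contradiction results. The fix is to insert phases, e.g.\ $a_i=(u_i+\im u_{i+2})/\sqrt2$, $b_i=(u_i-\im u_{i+2})/\sqrt2$, which gives $\mathcal{M}=\tfrac12\,\mathrm{diag}(\sigma_1+\sigma_3,\,\sigma_2+\sigma_4)$, of rank two whenever $\mathrm{rank}\,M\ge2$. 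So your $N=2$ case is salvageable, but as it stands the proposal proves the proposition neither for $N=2$ (without the repair) nor, more importantly, for general $N$.
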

To prove this, we need:

\begin{lemma}
 \label{symmetry-general-lemma-binominal-12-34}
For $n=4$,  any USEP state $\ket{\Psi}$ can be written as
\begin{eqnarray}
 \ket{\Psi} &=& \sum_{l_1+l_3=N}	a_{l_1,0}b_{l_3,0}\,\mathcal{S}\(\ket{e_1}+\frac{a_{0,1}}{a_{1,0}}\ket{e_2}\)^{\otimes l_1}\nonumber\\
&&\hspace{2cm}\otimes \(\ket{e_3}+\frac{b_{0,1}}{b_{1,0}}\ket{e_4}\)^{\otimes l_3}.
\label{4dec}
\end{eqnarray}
\end{lemma}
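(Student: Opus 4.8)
The plan is to recognize (\ref{4dec}) as the assertion that, relative to the split $\mathbb{C}^4=V_1\oplus V_2$ with $V_1=\mathrm{span}\{\ket{e_1},\ket{e_2}\}$ and $V_2=\mathrm{span}\{\ket{e_3},\ket{e_4}\}$, each block carries an i.i.d.\ state built on a single vector that is common to all occupation sectors. Expanding $\mathcal{S}(\ket{e_1}+\alpha\ket{e_2})^{\otimes l_1}\otimes(\ket{e_3}+\beta\ket{e_4})^{\otimes l_3}$ and comparing with Lemma~\ref{unisep-bosonic-n4-lemma-ab}, one sees that (\ref{4dec}) is equivalent to $a_{l_1,l_2}=a_{l_1+l_2,0}\binom{l_1+l_2}{l_2}\alpha^{l_2}$ and $b_{l_3,l_4}=b_{l_3+l_4,0}\binom{l_3+l_4}{l_4}\beta^{l_4}$, where $\alpha=a_{0,1}/a_{1,0}$ and $\beta=b_{0,1}/b_{1,0}$ are fixed, independent of the $l_i$. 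So it suffices to prove that, within each block, the degree-$p$ binary form $\sum_{l_1+l_2=p}a_{l_1,l_2}\mathcal{S}\ket{e_1}^{\otimes l_1}\ket{e_2}^{\otimes l_2}$ is the $p$-th power of one and the same linear form for every $p$. The ingredients I would draw on are the two rank-one factorizations already available, $\Psi_{l_1,l_2,l_3,l_4}=a_{l_1,l_2}b_{l_3,l_4}$ (Lemma~\ref{unisep-bosonic-n4-lemma-ab}) and $\Psi_{l_1,l_2,l_3,l_4}=c_{l_1,l_3}d_{l_2,l_4}$ (Eq.~(\ref{unisep-lemma-nonzero-cd})), together with the nonvanishing of every coefficient from Lemma~\ref{unisep-lemma-nonzero-main}.

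The obstruction to arguing directly is that separability for a fixed $(\Gamma,V)$ only factorizes the state sector by sector and says nothing about the internal structure of a block; since $\dim V_1=2$ cannot be resolved into two detectable subspaces when $n=4$, no single setup reaches inside $V_1$. My remedy is to rotate the blocks into each other: I would let the subgroup of $\mathrm{U}(4)$ that mixes $V_1$ with $V_2$ act---say the $\mathrm{SU}(2)$ rotating $\ket{e_2}$ into $\ket{e_3}$ by an angle $\theta$---and, in each rotated orthonormal frame, re-run the argument of Lemma~\ref{unisep-bosonic-n4-lemma-ab} for the resulting pair of complementary $2$-planes. Universal separability then forces every rotated cross-sector matrix to be rank one, i.e.\ all its $2\times2$ minors vanish, for all $p$ and all $\theta$. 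Expanding these minor conditions to first (and, if needed, higher) order in $\theta$, and using that an infinitesimal rotation transports one unit of occupation between the blocks, I expect linear recursions linking $a_{l_1,l_2}$ in sector $p$ to those in sectors $p\pm1$ and to the $b$'s. Invoking Lemma~\ref{unisep-lemma-nonzero-main} so that all the ratios make sense, solving the recursion should collapse it to the geometric law $a_{l_1-1,l_2+1}/a_{l_1,l_2}=l_2/[(l_1+1)\alpha]$ with a single $\alpha$---precisely the perfect-power structure---after which resumming $\ket{f_1}^{\otimes p}$ returns (\ref{4dec}).

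The hard part will be tying the different occupation sectors together through the \emph{same} linear form, and I expect this to be unreachable with finitely many setups. The three discrete block pairings $\{e_1e_2\}|\{e_3e_4\}$, $\{e_1e_3\}|\{e_2e_4\}$ and $\{e_1e_4\}|\{e_2e_3\}$ impose only equalities among cross products of off-diagonal coefficients; already for $N=2$ their common solution set has strictly larger dimension than the i.i.d.\ locus, so it contains non-i.i.d.\ states even when all coefficients are required to be nonzero (the state $\sum_i\ket{e_i}^{\otimes N}$ is a transparent, if degenerate, witness of the gap). The rigidity that enforces a common vector genuinely originates in the \emph{continuum} of rotated separability conditions, so the technical heart is to marshal the differentiated minor relations into a clean induction on $p$. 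A secondary bookkeeping point is the per-sector rescaling freedom in the factorization $a_{l_1,l_2}b_{l_3,l_4}$: working only with intra-sector ratios (and, where convenient, the basis adjustments of Lemma~\ref{unisep-lemma-nonzero-basic}) keeps $\alpha$ well defined, so that the value fixed in the $p=1$ sector may legitimately be propagated to all higher $p$.
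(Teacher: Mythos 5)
Your proposal is correct and follows essentially the same route as the paper's proof: the paper likewise exploits the continuum of rotated splittings $V'(U)$, applies the factorization of Lemma~\ref{unisep-bosonic-n4-lemma-ab} in each infinitesimally rotated frame (generators mixing $\ket{e_1}$ with $\ket{e_3}$ and $\ket{e_2}$ with $\ket{e_4}$), expands to first order in the rotation parameter, and converts the resulting constancy of the ratios $a_{l_1-1,l_2}/a_{l_1,l_2}$ and $a_{l_1,l_2-1}/a_{l_1,l_2}$ into the recursion whose solution is the binomial law resumming to Eq.~(\ref{4dec}). Two minor points: the geometric law should read $a_{l_1-1,l_2+1}/a_{l_1,l_2}=\alpha\,l_1/(l_2+1)$ rather than $l_2/[(l_1+1)\alpha]$ (your indices are transposed), and one genuinely needs two independent cross-block rotation families (as in the paper's $\epsilon_{13}$ and $\epsilon_{24}$) to obtain both ratio recursions, not just the single example rotation you name.
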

\begin{proof}
Using $U\in {\rm U}(4)$, we can construct a family of orthogonal subspaces $V'(U)=\{V'_1(U),V'_2(U)\}$ such that
\begin{equation}
	V'_1(U) = {\rm span}\{U\ket{e_1}, U\ket{e_2}\},
	\qquad
	V'_2(U)={V_1^\prime(U)}^\perp.\nonumber
\end{equation}
From the universal separability, $\ket{\Psi}$ is separable under $(\Gamma(i),V'(U))$ for any $i$ and $U$,
where $\Gamma(i)$ is that defined in Eq. (\ref{unisep-lemma-ab-Gamma-i}).
The operator $U\in{\rm U}(4)$ can be parameterized as
\begin{equation*}
U = \exp \left[ i\sum_{1\leq k\leq l \leq 4}\(\epsilon_{kl} M_{kl}+i \epsilon'_{kl}M'_{kl}\)\right],
\end{equation*}
with $M_{kl}$ and $M_{kl}^\prime$ being generators whose components are given by
\be
&& \(M_{kl} \)_{\alpha\beta} = \delta_{k\alpha}\delta_{l\beta} + \delta_{k\beta}\delta_{l\alpha},\nonumber\\
 && \(M'_{kl} \)_{\alpha\beta} = \delta_{k\alpha}\delta_{l\beta} - \delta_{k\beta}\delta_{l\alpha},\nonumber
\ee
where $\delta_{k\alpha}$ is the Kronecker delta.
Setting $\epsilon_{ij}=\epsilon'_{ij}=0$ except infinitesimal $\epsilon_{13}$ and $\epsilon'_{13}$, we obtain
\be
 &&\ket{e_1}\rightarrow \ket{e'_1} = \ket{e_1} + i\epsilon^* \ket{e_3},\nonumber\\
  &&\ket{e_3}\rightarrow \ket{e'_3} =i \epsilon \ket{e_1} +\ket{e_3},\nonumber
\ee
where we introduced $\epsilon=\epsilon_{13}+i\epsilon_{13}^\prime$ and its complex conjugate $\epsilon^*$. 
In terms of the old components $\Psi_{l_1,l_2,l_3,l_4}=a_{l_1,l_2}b_{l_3,l_4}$, the new components $\Psi^\prime_{l_1,l_2,l_3,l_4}$ are rewritten as
\be
&&\Psi^\prime_{l_1,l_2,l_3,l_4} =\Psi_{l_1,l_2,l_3,l_4}\nonumber\\
&&\qquad\qquad\ \, \times\[1-i\epsilon^*(l_3+1)\frac{A_{l_1,l_2}}{B_{l_3+1, l_4}}-i\epsilon (l_1+1)\frac{B_{l_3,l_4}}{A_{l_1+1, l_2}}\]\nonumber\\
&&\qquad\qquad\ \, +\, {\cal O}(|\epsilon|^2),\nonumber
\ee
where
\be
A_{l_1,l_2}=a_{l_1-1,l_2}/a_{l_1,l_2},
\qquad
B_{l_3,l_4}=b_{l_3-1,l_4}/b_{l_3,l_4},\nonumber
\ee
with $a_{-1,l_2}=b_{-1,l_4}=a_{N+1,l_2}=b_{N+1,l_4}=0$.
On the other hand, it follows from Lemma \ref{unisep-bosonic-n4-lemma-ab} and the universal separability of $\ket{\Psi}$ that there exist $a'_{l_1,l_2}$ and $b'_{l_3,l_4}$, such that
\begin{equation*}
\Psi^\prime_{l_1,l_2,l_3,l_4} = a_{l_1,l_2}^\prime b_{l_3,l_4}^\prime.
\end{equation*}
Hence, the ratio $\Psi^\prime_{l_1,l_2,l_3,l_4}/\Psi^\prime_{l_1^\prime,l_2^\prime,l_3,l_4}$ must be independent of $l_3$ and $l_4$.
We can rewrite this ratio as
\be
\frac{\Psi^\prime_{l_1,l_2,l_3,l_4}}{\Psi^\prime_{l_1^\prime,l_2^\prime,l_3,l_4}}=\frac{a_{l_1,l_2}}{a_{l_1^\prime,l_2^\prime}}\(1-i\epsilon^*X-i\epsilon Y\)+{\cal O}(|\epsilon|^2),\nonumber
\ee
where
\be
&&X=\frac{l_3+1}{B_{l_3+1,l_4}}\(A_{l_1,l_2}-A_{l_1^\prime,l_2^\prime}\),\nonumber\\
&&Y=B_{l_3,l_4}\(\frac{l_1+1}{A_{l_1+1,l_2}}-\frac{l_1^\prime+1}{A_{l_1^\prime+1,l_2^\prime}}\).\nonumber
\ee
Since X and Y are independent of $l_3$ and $l_4$, we obtain
\be
A_{l_1,l_2}=A_{l_1^\prime,l_2^\prime},
\qquad
\frac{l_1+1}{A_{l_1+1,l_2}}=\frac{l_1^\prime+1}{A_{l_1^\prime+1,l_2^\prime}}.
\label{recA}
\ee
The similar argument for infinitesimal $\epsilon_{24}$ and $\epsilon_{24}^\prime$ gives
\be
C_{l_1,l_2}=C_{l_1^\prime,l_2^\prime},
\qquad
\frac{l_2+1}{C_{l_1,l_2+1}}=\frac{l_2^\prime+1}{C_{l_1^\prime,l_2^\prime+1}},
\label{recC}
\ee
where
\be
C_{l_1,l_2}=a_{l_1,l_2-1}/a_{l_1,l_2}.\nonumber
\ee
Setting $l_1^\prime=l_1-1$ and $l_2^\prime=l_2+1$ in Eq.~(\ref{recA}), we find
\be
\frac{a_{l_1,l_2+1}}{a_{l_1+1,l_2}}=\frac{l_1+1}{l_1}\frac{a_{l_1-1,l_2+1}}{a_{l_1,l_2}}=\cdots=(l_1+1)\frac{a_{0,l_2+1}}{a_{1,l_2}}.\nonumber
\ee
Similarly, setting $l_1^\prime=l_1+1$ and $l_2^\prime=l_2-1$ in Eq.(\ref{recC}), we observe
\be
\frac{a_{l_1,l_2+1}}{a_{l_1+1,l_2}}=\frac{l_2}{l_2+1}\frac{a_{l_1,l_2}}{a_{l_1+1,l_2-1}}=\cdots=\frac{1}{l_2+1}\frac{a_{l_1,1}}{a_{l_1+1,0}}.\nonumber
\ee
Combining these two, we find
\be
\frac{a_{l_1,l_2+1}}{a_{l_1+1,l_2}}=(l_1+1)\frac{a_{0,l_2+1}}{a_{1,l_2}}=\frac{l_1+1}{l_2+1}\frac{a_{0,1}}{a_{1,0}}.\nonumber
\ee
We solve this recursion relation to obtain
\begin{eqnarray}
 a_{l_1,l_2} 
&=& a_{l_1+1,l_2-1} \frac{l_1+1}{l_2}\frac{a_{0,1}}{a_{1,0}}\nonumber\\
&=&\cdots = a_{l_1+l_2, 0}
{l_1+l_2 \choose l_2}
\(\frac{a_{0,1}}{a_{1,0}}\)^{l_2}.\nonumber
\end{eqnarray}
Hence, we arrive at
\be
 &&\sum_{l_1,l_2}a_{l_1,l_2} \,\mathcal{S}\ket{e_1}^{\otimes l_1}\ket{e_2}^{\otimes l_2}\nonumber\\
&&\qquad\qquad\qquad= a_{l_1+l_2,0}\,\mathcal{S}\(\ket{e_1}+\frac{a_{0,1}}{a_{1,0}}\ket{e_2}\)^{\otimes (l_1+l_2)}.\nonumber
\ee
An analogous argument leads to
\be
 &&\sum_{l_3,l_4}b_{l_3,l_4} \,\mathcal{S}\ket{e_3}^{\otimes l_3}\ket{e_4}^{\otimes l_4}\nonumber\\
&&\qquad\qquad\qquad= b_{l_3+l_4,0}\,\mathcal{S}\(\ket{e_3}+\frac{b_{0,1}}{b_{1,0}}\ket{e_4}\)^{\otimes (l_3+l_4)}.\nonumber
\ee
Replacing the indices and combining them, we obtain Eq.~(\ref{4dec}),
which completes the proof.
\end{proof}

Now we can provide:

\noindent
\begin{proof}[Proof of Proposition \ref{n4iid}]
Repeating a similar argument of Lemma \ref{symmetry-general-lemma-binominal-12-34} by exchanging $\ket{e_2}$ and $\ket{e_3}$,
we find that there exists $c_{l_1,0},d_{l_2,0},c_{0,1},d_{0,1}$ such that
\begin{eqnarray}
 \ket{\Psi} &=& \sum_{l_1,l_2}	c_{l_1,0}d_{l_2,0}\,\mathcal{S}\(\ket{e_1}+\frac{c_{0,1}}{c_{1,0}}\ket{e_3}\)^{\otimes l_1}\nonumber\\
&&\hspace{2cm}\otimes \(\ket{e_2}+\frac{d_{0,1}}{d_{1,0}}\ket{e_4}\)^{\otimes l_2},\nonumber
\label{cddec}
\end{eqnarray}
where $c_{0,0}=d_{0,0}=0$. 
Thus the coefficient $\Psi_{l_1,l_2,l_3,l_4}$ admits two different expressions. 
Equating these two for $\Psi_{l_1,0,l_3,0}$, we find
\begin{equation}
 a_{l_1,0}b_{l_3,0} = c_{N,0}
{N\choose k}
\(\frac{c_{0,1}}{c_{1,0}}\)^{k},
\label{abc}
\end{equation}
where we have used $l_1+l_3=N$. Plugging Eq.~(\ref{abc}) into Eq.~(\ref{4dec}),
we find
\begin{equation*}
 \ket{\Psi} = c_{N,0}\[\ket{e_1}+\frac{a_{0,1}}{a_{1,0}}\ket{e_2}+\frac{c_{0,1}}{c_{1,0}}(\ket{e_3}+\frac{b_{0,1}}{b_{1,0}}\ket{e_4})\]^{\otimes N},
\end{equation*}
which is an i.i.d.~pure state.
\end{proof}

In the third step, we generalize Proposition~\ref{n4iid} to the case $n\geq 4$ by induction using 
Proposition~\ref{n4iid} as the initial condition.
Namely,  we wish to show:
\begin{proposition}
 \label{unisep-lemma-psi-general}
 Let $q\geq 4$ be an integer.   If for $n=q$ the statement that a bosonic pure state is USEP if and only if it is an i.i.d.~pure state is true, then it is also true for 
$n=q+1$.
\end{proposition}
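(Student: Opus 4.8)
The plan is to run the induction by restricting $\ket{\Psi}$ to the lower-dimensional subsystems already covered by the hypothesis for $n=q$. Since the ``only if'' direction is established in Section~\ref{section-unisep}, I assume $\ket{\Psi}\in\mathcal{H}_{\mathcal{S}}$ is a nonzero USEP state with $n=q+1$ and aim to show it is i.i.d. First I would prove a reduction lemma: for every $q$-dimensional subspace $W\subseteq\C^{q+1}$, the orthogonal projection of $\ket{\Psi}$ onto $[W^{\otimes N}]_{\mathcal{S}}$, which equals $\Pi_W^{\otimes N}\ket{\Psi}$ with $\Pi_W$ the one-particle projector onto $W$ (note $\Pi_W^{\otimes N}$ commutes with $\mathcal{S}$), is a USEP state of the $N$-boson system built on $W\cong\C^q$. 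This is immediate from the definitions: any setup $(\Gamma,V)$ whose subspaces $V_k$ all lie in $W$ is admissible for $\C^{q+1}$ as well, and because $\mathcal{H}_{\mathcal{S}}(\Gamma,V)\subseteq[W^{\otimes N}]_{\mathcal{S}}$ the state projected inside $W$ coincides with $\ket{\Psi(\Gamma,V)}$; separability of the latter for all such $(\Gamma,V)$ is precisely universal separability inside $W$. Invoking the induction hypothesis for $n=q$, each restriction $\Pi_W^{\otimes N}\ket{\Psi}$ is therefore either zero or an i.i.d.\ state $\ket{\phi_W}^{\otimes N}$ with $\ket{\phi_W}\in W$.

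Next I would pass to homogeneous polynomials: fixing an orthonormal basis $\{\ket{e_i}\}_{i=1}^{q+1}$, a symmetric state corresponds to a degree-$N$ form $P(z_1,\dots,z_{q+1})$, and restriction to the coordinate hyperplane $W_j=\mathrm{span}\{\ket{e_i}:i\neq j\}$ is the substitution $z_j=0$. The previous step then says each $P|_{z_j=0}$ is an $N$-th power of a linear form (or zero). I would fix a candidate vector $\ket{\phi}=\sum_i\phi_i\ket{e_i}$ by gluing two complementary coordinate restrictions, say $z_{q+1}=0$ and $z_q=0$, along their common hyperplane $z_q=z_{q+1}=0$; this determines the $\phi_i$ up to a single overall $N$-th root of unity. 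I then set $Q=(\sum_i\phi_i z_i)^N$ and $D=P-Q$.

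The main obstacle is that $D$ can a priori contain monomials in which every variable appears, and such monomials are invisible on all coordinate hyperplanes, so the two reference restrictions alone do not pin down $P$. To remove them I would exploit the fact that universal separability is invariant under a global unitary $U^{\otimes N}$, which merely rotates the orthogonal family $\{V_k\}\mapsto\{UV_k\}$ and preserves separability. Applying the reduction lemma in every rotated basis then shows that $P|_H$ is an $N$-th power of a linear form for \emph{every} hyperplane $H$ through the origin, not merely the coordinate ones. Geometrically this means the degree-$N$ hypersurface $\{P=0\}$ meets every hyperplane in a single linear subspace counted with multiplicity $N$, which forces its reduced locus to be one hyperplane $\{\ell=0\}$ and hence $P=c\,\ell^N$. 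Concretely, once one knows $P|_H=(\ell|_H)^N$ for all $H$, the form $D$ vanishes identically on infinitely many distinct hyperplanes; since a nonzero degree-$N$ form can be divisible by at most $N$ distinct linear factors, this gives $D=0$. Identifying $\ell$ with $\sum_i\phi_i z_i$ yields $\ket{\Psi}=\ket{\phi}^{\otimes N}$, completing the induction.

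The delicate point to handle with care is the $N$-th-root-of-unity phase carried by each local restriction: the linear form cutting out $P|_H$ is only determined up to such a phase, so one must verify that these discrete phases can be chosen consistently over the connected family of hyperplanes. I would anchor the choice at the reference hyperplane where $\ell$ was fixed and propagate it by a continuity (monodromy) argument, using that a discrete phase cannot jump along a continuous path where $P|_H\neq0$; this guarantees that the locally constructed linear forms are genuine restrictions of the single global $\ell$, so that $D|_H=0$ indeed holds for the whole family and the finishing divisibility argument applies.
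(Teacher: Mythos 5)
Your overall route is genuinely different from the paper's, and its first two steps are sound: the reduction lemma (the projection $\Pi_W^{\otimes N}\ket{\Psi}$ of a USEP state onto $[W^{\otimes N}]_{\mathcal{S}}$ is USEP inside $W$) is correct, as is the translation into homogeneous forms, so the induction hypothesis does tell you that $P|_H$ is an $N$-th power of a linear form (or zero) for every hyperplane $H$. For comparison, the paper never needs a continuum of hyperplanes: it expands $\ket{\Psi}=\mathcal{S}\sum_j\ket{\Psi(i,j)}\otimes y_j(i)\ket{e_i}^{\otimes j}$ along the finitely many coordinate directions $i$, proves that \emph{every} layer $\ket{\Psi(i,j)}$ --- not only your $j=0$ projection --- is USEP, applies the induction hypothesis to each layer, and then forces all layers onto a common direction by comparing coefficients between different choices of $i$. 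The gap in your proposal lies in the passage from ``$P|_H$ is an $N$-th power for every $H$'' to ``$P$ is an $N$-th power.''

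Concretely, your finishing argument is circular. To conclude $D|_H=0$ you need $(\ell|_H)^N=P|_H$; what you actually know is only that $P|_H=\lambda_H^N$ for \emph{some} linear form $\lambda_H$ on $H$, defined up to an $N$-th root of unity. Your monodromy step treats the discrepancy between $\ell|_H$ and $\lambda_H$ as a discrete phase that ``cannot jump'' along a path --- but that ratio is a root of unity only if $(\ell|_H)^N=P|_H$ already holds, which is exactly what is to be proven; for nearby $H$ the continuous lift $\lambda_H$ can simply fail to be the restriction of any fixed global linear form, with no discrete jump occurring anywhere. A sanity check that continuity and connectedness cannot be the operative mechanism: in two variables ($q+1=2$) every restriction of $P$ to a one-dimensional subspace is trivially an $N$-th power and the family of such subspaces is connected, yet almost no binary form is an $N$-th power; the claim you need is true only because of the dimension, and the dimension never enters your argument. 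What does close the gap is the algebro-geometric content you asserted but did not prove: iterating your reduction lemma shows $P|_S$ is an $N$-th power on every $2$-dimensional subspace $S$, i.e.\ every projective line meets the reduced hypersurface $\{P=0\}_{\mathrm{red}}\subset\mathbb{P}^{q}$ in at most one point or lies inside it; since a reduced hypersurface having two distinct components, or a component of degree at least $2$, meets a generic line in at least two distinct points (generic transversality / Bertini), $\{P=0\}_{\mathrm{red}}$ must be a single hyperplane $\{\ell=0\}$, whence $P=c\,\ell^N$. With that input substituted for the gluing-and-monodromy step (which also silently assumes $P$ does not vanish identically on the gluing locus $z_q=z_{q+1}=0$), your induction closes; as written, it does not.
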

Before proving this,  we recall that any state $\ket{\Psi}\in{\cal H}_{\cal S}=[(\C^{n+1})^{\ot N}]_{\cal S}$ can be written as
\begin{equation}
 \ket{\Psi} = \mathcal{S}\sum_{j=0}^N \ket{\Psi(i,j)}\otimes y_{j}(i)\ket{e_i}^{\otimes j}
 \label{ijdec}
\end{equation}
by using $y_j(i)\in\C$ and an appropriate vector $\ket{\Psi(i,j)}\in \left[V(i)^{\otimes (N-j)}\right]_{\mathcal{S}}$, where $V(i)={\rm span}\{\ket{e_j}\}_{j\neq i}$ uses an orthonormal basis $\{\ket{e_i}\}_{i=1}^{n+1}$.
This expression is quite useful since the following holds.
\begin{lemma}
 \label{symmetry-general-lemma-sub-univsep}
Let $\ket{\Psi}$ be a USEP state. Then $\ket{\Psi(i,j)}$ in (\ref{ijdec}) is USEP in $\left[V(i)^{\otimes N-j}\right]_{\mathcal{S}}$.
\end{lemma}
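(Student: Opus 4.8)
The plan is to verify the defining property of universal separability for $\ket{\Psi(i,j)}$ directly, by lifting an arbitrary measurement setup on the $(N-j)$-particle subsystem to one on the full $N$-particle system and then invoking the assumed USEP of $\ket{\Psi}$. Fix the mode index $i$ and the level $j$, and assume (without loss of generality) $y_j(i)\neq 0$, so the term $\ket{\Psi(i,j)}$ is genuinely present in (\ref{ijdec}). Let $(\tilde\Gamma,\tilde V)$, with $\tilde\Gamma=\{\tilde\Gamma_k\}_{k=1}^s$ and $\tilde V=\{\tilde V_k\}_{k=1}^s$, be an arbitrary measurement setup for the bosonic system $[V(i)^{\otimes(N-j)}]_\mathcal{S}$, so that $\sum_k|\tilde\Gamma_k|=N-j$ and the $\tilde V_k$ are mutually orthogonal subspaces of $V(i)$. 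I would lift it to a setup $(\Gamma,V)$ on the full system by appending one extra subsystem that carries the remaining $j$ particles and is assigned to the one-dimensional subspace spanned by $\ket{e_i}$: set $\Gamma=\{\tilde\Gamma_1,\dots,\tilde\Gamma_s,\Gamma_{s+1}\}$ with $|\Gamma_{s+1}|=j$, and $V=\{\tilde V_1,\dots,\tilde V_s,\text{span}\{\ket{e_i}\}\}$. Since each $\tilde V_k\subseteq V(i)$ is orthogonal to $\ket{e_i}$, all subspaces of $V$ are pairwise orthogonal, so $(\Gamma,V)$ is a legitimate setup.

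The key observation is that the decomposition (\ref{ijdec}) is exactly the grading of $\mathcal{H}_\mathcal{S}$ by the occupation number of the mode $\ket{e_i}$, and that $\mathcal{H}_\mathcal{S}(\Gamma,V)$ sits entirely inside the sector with precisely $j$ particles in that mode. Hence projecting $\ket{\Psi}$ onto $\mathcal{H}_\mathcal{S}(\Gamma,V)$ annihilates every term of (\ref{ijdec}) with level $j'\neq j$, and on the surviving term it acts only on the $V(i)$-factor, projecting it onto the measurable subspace $\mathcal{H}_\mathcal{S}(\tilde\Gamma,\tilde V)$. I therefore expect
\begin{equation}
\ket{\Psi(\Gamma,V)} \propto \mathcal{S}\(\ket{\Psi(i,j)(\tilde\Gamma,\tilde V)}\otimes\ket{e_i}^{\otimes j}\),\nonumber
\end{equation}
where $\ket{\Psi(i,j)(\tilde\Gamma,\tilde V)}$ denotes the projection of $\ket{\Psi(i,j)}$ onto $\mathcal{H}_\mathcal{S}(\tilde\Gamma,\tilde V)$ inside $[V(i)^{\otimes(N-j)}]_\mathcal{S}$.

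Applying the unitary map (\ref{imap}) to the full setup, the measurement-space image of $\ket{\Psi(\Gamma,V)}$ is the image of $\ket{\Psi(i,j)(\tilde\Gamma,\tilde V)}$ under the sub-setup, tensored with the frozen one-dimensional factor $\ket{e_i}^{\otimes j}$. Since $\ket{\Psi}$ is USEP, $\ket{\Psi(\Gamma,V)}$ is separable, so this measurement state is a product across all $s+1$ tensor factors. But the appended factor is one-dimensional, and tensoring by a fixed vector neither creates nor destroys product structure among the first $s$ factors; hence the image of $\ket{\Psi(i,j)}$ is a product across the $s$ factors of the sub-setup, i.e. $\ket{\Psi(i,j)(\tilde\Gamma,\tilde V)}$ is separable. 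As $(\tilde\Gamma,\tilde V)$ was arbitrary, $\ket{\Psi(i,j)}$ is USEP in $[V(i)^{\otimes(N-j)}]_\mathcal{S}$.

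The main obstacle, and the step I would write out carefully, is justifying the displayed factorization: one must check that the orthogonal projection onto $\mathcal{H}_\mathcal{S}(\Gamma,V)$ both isolates the single occupation level $j$ and commutes with the sub-setup projection acting on the $V(i)$-factor, so that the $\ket{e_i}^{\otimes j}$ piece truly decouples as a spectator. This follows from the direct-sum grading by the $\ket{e_i}$-occupation number together with the orthogonality of $V(i)$ to $\ket{e_i}$, but it should be stated cleanly rather than asserted. The only remaining bookkeeping is the degenerate case $y_j(i)=0$, for which the term is absent and the claim is vacuous, along with the normalization factors suppressed throughout.
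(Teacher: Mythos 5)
Your proof is correct and is essentially the paper's own argument: the paper likewise lifts an arbitrary sub-setup $(\Gamma',V')$ on $\left[V(i)^{\otimes (N-j)}\right]_{\mathcal{S}}$ to a full setup by appending the block $\{N-j+1,\dots,N\}$ assigned to $\mathrm{span}\{\ket{e_i}\}$, and relies on the same factorization $\ket{\Psi(\Gamma,V)}=\mathcal{S}\bigl(\ket{\Psi(i,j)(\Gamma',V')}\otimes\ket{e_i}^{\otimes j}\bigr)$. The only differences are presentational: the paper argues by contradiction (an entangling sub-setup lifts to an entangling setup for $\ket{\Psi}$) whereas you argue directly, and your added care about the occupation-number grading and the degenerate case $y_j(i)=0$ are points the paper simply asserts or omits.
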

\begin{proof}
Suppose that $\ket{\Psi(i,j)}$ is not USEP in $\left[V(i)^{\otimes N-j}\right]_{\mathcal{S}}$.
Then there exists a pair $(\Gamma',V')$ such that the observable part $\ket{\Psi(i,j)(\Gamma',V')}$ cannot be written by a symmetrized single term.
Here, $\Gamma^\prime$ is a partition of the number $N-j$ and $V^\prime$ is a decomposition of $V(i)$ into subspaces which are orthogonal to one another.
Now, a pair $(\Gamma,V)$ is given by 
\be
\Gamma = \Gamma' \cup \{\{N-j+1,N-j+2,\cdots,N\}\},\nonumber
\ee 
and 
\be
V= V'\cup \{\text{span}\{\ket{e_i}\}\}.\nonumber
\ee
Then the measurable part $\ket{\Psi(\Gamma,V)}$ is found to be $\mathcal{S}\(\ket{\Psi(i,j)(\Gamma',V')}\otimes \ket{e_i}^{\otimes N-j}\)$.
However, because of the property of $\ket{\Psi(i,j)(\Gamma',V')}$ mentioned above, $\ket{\Psi(\Gamma,V)}$ cannot be written by a symmetrized single term.
Since this contradicts with the assumption we started with, we conclude that $\ket{\Psi}$ is USEP in $\left[V(i)^{\otimes N-j}\right]_{\mathcal{S}}$.
\end{proof}

From the assumption of induction posed in Proposition \ref{unisep-lemma-psi-general} and Lemma \ref{symmetry-general-lemma-sub-univsep}, 
we have for $n = q$,
\be
\ket{\Psi(i,j)}=\ket{\psi(i,j)}^{\ot (N-j)},\nonumber
\ee
where
\begin{equation*}
 \ket{\psi(i,j)} = \sum_{k\neq i}x_{kj}(i)\ket{e_k}\in\C^{n+1}.
\end{equation*}
Comparing the coefficients of $\ket{\Psi}$ for different $i$, we reach
\begin{lemma}
 \label{unisep-lemma-psi-ij}
The coefficients $x_{kj}(i)$ can be chosen in such a way that they are independent of $j$.
\end{lemma}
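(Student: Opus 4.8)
The plan is to pass to the occupation-number (equivalently, homogeneous-polynomial) representation of the symmetric states and read off the claim by comparing the factorized forms produced by different ``pivot'' indices. Concretely, I would expand $\ket{\Psi}$ over the monomial symmetric basis labelled by occupation vectors $\vec m=(m_1,\dots,m_{n+1})$ subject to $\sum_k m_k=N$ as in Eq.~(\ref{conservation}), and write the corresponding component as $\Psi_{\vec m}$. For each fixed index $i$, the decomposition (\ref{ijdec}), together with Lemma~\ref{symmetry-general-lemma-sub-univsep} and the induction hypothesis of Proposition~\ref{unisep-lemma-psi-general}, gives $\ket{\Psi(i,j)}=\ket{\psi(i,j)}^{\otimes(N-j)}$, so that in components $\Psi_{\vec m}=y_{m_i}(i)\binom{N-m_i}{(m_k)_{k\neq i}}\prod_{k\neq i}x_{k,m_i}(i)^{m_k}$ with $j=m_i$. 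Thus every pivot $i$ supplies one factorized expression for the same numbers $\Psi_{\vec m}$, and the goal is to show that the direction of the vector $(x_{kj}(i))_{k\neq i}$ is independent of $j$.

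The key step I would carry out is to compare two such expressions obtained from different pivots. Fix two modes $k\neq k'$ with $k,k'\neq i$ and a spectator index $p\notin\{i,k,k'\}$, which exists because the induction runs over constituent dimension $n+1=q+1\geq 5$. Consider two occupation vectors $\vec m$ and $\vec m^{(+)}$ differing only by transferring one boson from mode $k'$ to mode $k$. Using the $i$-pivot expression, the ratio $\Psi_{\vec m^{(+)}}/\Psi_{\vec m}$ equals $\frac{m_{k'}}{m_k+1}\cdot\frac{x_{k,m_i}(i)}{x_{k',m_i}(i)}$, since $m_i$ and every other occupation are unchanged, so that $y_{m_i}(i)$ and all remaining monomials cancel and only the explicit combinatorial factor $m_{k'}/(m_k+1)$ survives. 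The crucial observation is that the very same combinatorial factor appears when the ratio is computed from the $p$-pivot expression, because that factor involves only $m_k$ and $m_{k'}$ and is insensitive to the choice of pivot. Equating the two evaluations and cancelling the common factor yields $\frac{x_{k,m_i}(i)}{x_{k',m_i}(i)}=\frac{x_{k,m_p}(p)}{x_{k',m_p}(p)}$.

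The right-hand side of this identity is manifestly independent of $m_i$. Hence, varying $j=m_i$ while holding $m_p$ fixed and redistributing the remaining bosons, the ratio $x_{k,j}(i):x_{k',j}(i)$ is seen to be independent of $j$ for every pair $k,k'\neq i$. Consequently $\ket{\psi(i,j)}$ is, up to a $j$-dependent scalar $\lambda_j(i)$, a fixed one-particle state $\sum_{k\neq i}x_k(i)\ket{e_k}$; absorbing $\lambda_j(i)^{N-j}$ into the coefficient $y_j(i)$ then lets us take $x_{kj}(i)=x_k(i)$ independent of $j$, which is exactly the assertion of Lemma~\ref{unisep-lemma-psi-ij}.

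I expect the main obstacle to be technical bookkeeping rather than a conceptual gap. One must ensure that the occupation vectors $\vec m,\vec m^{(+)}$ used in the comparison actually exist, i.e.\ that $m_k\geq 0$, $m_{k'}\geq 1$, and a spectator mode $p$ carries the prescribed fixed occupation while the totals match $N$; and one must avoid dividing by vanishing components. The cleanest way around the latter is to recast the comparison as the cross-multiplied identity $x_{k,m_i}(i)\,x_{k',m_p}(p)=x_{k',m_i}(i)\,x_{k,m_p}(p)$, valid whenever the two occupation vectors are admissible, and to note that any level $j$ with $\ket{\Psi(i,j)}=0$ imposes no constraint and may be assigned the fixed direction freely. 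Verifying these admissibility conditions for small $N$ is the only place where genuine care is needed; once the two pivot factorizations are written side by side, the combinatorial and algebraic content is immediate.
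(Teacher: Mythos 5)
Your core mechanism is sound and is essentially the paper's own: compare the factorized component expressions produced by two different pivots, note that the combinatorial factor (your $m_{k'}/(m_k+1)$; multinomial ratios in the paper) is pivot-independent, and conclude that $x_{k,j}(i):x_{k',j}(i)$ equals a quantity built from a second pivot that does not depend on $j$. In the paper this is the comparison of the coefficients of $\mathcal{S}\ket{e_{\bar{k}}}^{\otimes(N-j)}\ket{e_{\bar{i}}}^{\otimes j}$ and $\mathcal{S}\ket{e_{\bar{k}}}^{\otimes(N-j-1)}\ket{e_{k'}}\ket{e_{\bar{i}}}^{\otimes j}$ between the pivots $\bar{i}$ and $k''$, yielding $x_{k'j}(\bar{i})/x_{\bar{k}j}(\bar{i})=x_{k'0}(k'')/x_{\bar{k}0}(k'')$. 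However, there is a genuine gap in your handling of vanishing coefficients, and it is not mere bookkeeping. What the two pivot expressions actually give is $(m_k+1)\,x_{k',m_i}(i)\,\Psi_{\vec{m}^{(+)}}=m_{k'}\,x_{k,m_i}(i)\,\Psi_{\vec{m}}$, and the same with $(i,m_i)$ replaced by $(p,m_p)$; your cross-multiplied identity $x_{k,m_i}(i)\,x_{k',m_p}(p)=x_{k',m_i}(i)\,x_{k,m_p}(p)$ follows from these only when $\Psi_{\vec{m}}\Psi_{\vec{m}^{(+)}}\neq 0$. It is \emph{not} ``valid whenever the two occupation vectors are admissible'': if both components vanish, the comparison is the vacuous statement $0=0$.

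The dangerous case is not the one you dismiss (levels with $\ket{\Psi(i,j)}=0$), but levels with $y_j(i)\ket{\psi(i,j)}\neq 0$ whose supports differ, e.g.\ $\ket{\psi(i,0)}$ supported on modes $\{1,2\}$ and $\ket{\psi(i,j)}$ supported on modes $\{3,4\}$. For such a configuration every comparison you propose is vacuous: a non-vacuous level-$j$ comparison needs $k,k'$ in the support of $\ket{\psi(i,j)}$, a non-vacuous level-$0$ comparison needs $k,k'$ in the support of $\ket{\psi(i,0)}$, and since these are disjoint no common pair $(k,k')$ with a common spectator value links the two levels. Your argument then neither identifies the two directions nor derives a contradiction, yet the directions genuinely differ, so the lemma's conclusion would fail; such states must be \emph{ruled out}, which requires input your proposal does not contain. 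This is exactly what the first half of the paper's proof supplies: (a) the degenerate case split (if all cross terms $x_{kj}(\bar{i})y_j(\bar{i})$, $j\geq 1$, vanish, then $\ket{\Psi}=\ket{\Psi(\bar{i},0)}$ and one is done), and (b) a non-vanishing propagation step: if one cross term $x_{\bar{k}\bar{j}}(\bar{i})y_{\bar{j}}(\bar{i})\neq 0$, then comparing the single component of $\mathcal{S}\ket{e_{\bar{k}}}^{\otimes(N-\bar{j})}\ket{e_{\bar{i}}}^{\otimes\bar{j}}$ with the level-$0$ term of a third pivot $i$ --- which is the perfect power $\ket{\psi(i,0)}^{\otimes N}$ and hence has completely factorized coefficients --- forces $x_{\bar{k}0}(i)\neq 0$ and $x_{\bar{i}0}(i)\neq 0$, and then comparing back forces $x_{\bar{k}j}(\bar{i})\neq 0$ and $y_j(\bar{i})\neq 0$ for \emph{all} $j$. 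Only after this step are all the components entering the ratio comparison guaranteed nonzero. With (a) and (b) added, your occupation-number version of the argument closes; without them it fails precisely on the disjoint-support configurations.
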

\begin{proof}
For simplicity, we take an integer $\bar{i}$ and consider the case $i=\bar{i}$.
We will show that $x_{kj}(\bar{i})$ is independent of $j$. This statement
holds for any integer $\bar{i}$, and hence we prove the lemma \ref{unisep-lemma-psi-ij}.

If the coefficients $x_{kj}(\bar{i}),y_{j}(\bar{i})$ satisfy $x_{kj}(\bar{i})y_{j}(\bar{i})=0$ for all $j,k$
such that $1\leq j\leq N, 0\leq k \leq N-j$,
then we may redefine the coefficients so that $y_j(\bar{i})=0$ for $1\leq j\leq N$.
Then the state $\ket{\Psi}$ becomes $\ket{\Psi(\bar{i}, 0)}$, which is i.i.d.

Next, we consider the case that there exist $\bar{k}$ and $\bar{j}$, 
such that $x_{\bar{k}\bar{j}}(\bar{i})y_{\bar{j}}(\bar{i})\neq 0$.
First, recall that the decomposition (\ref{ijdec}) depends on the index $i$.
Thus we have
\be
\ket{\Psi}
&=&\mathcal{S}\sum_{j=0}^N \ket{\Psi(i,j)}\otimes y_{j}(i)\ket{e_i}^{\otimes j}\nonumber\\
&=&\mathcal{S}\sum_{j=0}^N \ket{\Psi(\bar{i},j)}\otimes y_{j}(\bar{i})\ket{e_{\bar{i}}}^{\otimes j},\nonumber
\ee
where $i\neq \bar{k}$ and $i\neq \bar{i}$.
Comparing the coefficients of $\mathcal{S}\ket{e_{\bar{k}}}^{\otimes (N-\bar{j})}\ket{e_{\bar{i}}}^{\otimes \bar{j}}$ for all $i$ 
such that $i\neq \bar{k}, i\neq \bar{i}$, we find
\begin{eqnarray*}
 {N \choose\bar{j}}x_{\bar{k}0}(i)^{N-\bar{j}} x_{\bar{i}0}(i)^{\bar{j}}= x_{\bar{k}\bar{j}}(\bar{i})^{N-\bar{j}}y_{\bar{j}}(\bar{i}) \neq 0.
\end{eqnarray*}
This means $x_{\bar{k}0}(i)\neq 0$ and $x_{\bar{i}0}(i)\neq 0$.
Similarly, comparing the coefficients of $\mathcal{S}\ket{e_{\bar{k}}}^{\otimes (N-j)}\ket{e_{\bar{i}}}^{\otimes j}$ for all $j$ such that $0\leq j \leq N$, we find
\begin{eqnarray}
 x_{\bar{k}j}(\bar{i})^{N-j}y_{j}(\bar{i})={N \choose j}x_{\bar{k}0}(i)^{N-j} x_{\bar{i}0}(i)^{j} \neq 0.\nonumber
\end{eqnarray}
This means $x_{\bar{k}j}(\bar{i})\neq 0, y_j(\bar{i})\neq 0$ for all $j$.

Since $x_{\bar{k}j}(\bar{i})$ and $y_j(\bar{i})$ are nonzero, we have a well-defined ratio between the coefficients of $\mathcal{S}\ket{e_{\bar{k}}}^{\otimes (N-j)}\ket{e_{\bar{i}}}^{\otimes j}$ and $\mathcal{S}\ket{e_{\bar{k}}}^{\otimes (N-j-1)}\ket{e_{k'}}\ket{e_{\bar{i}}}^{\otimes j}$
for $i=\bar{i}$ and $i=k''$, where $\bar{k},k',k'',\bar{i}$ are different from each other.
Since these coefficients are rewritten as
\begin{eqnarray}
x_{\bar{k}j}(\bar{i})^{N-j}y_j(\bar{i})={N \choose j}x_{\bar{k}0}(k'')^{N-j}x_{\bar{i}0}(k'')^j,\nonumber
\end{eqnarray}
and
\begin{eqnarray}
&& \frac{(N-j)!}{(N-j-1)!}x_{\bar{k}j}(\bar{i})^{N-j-1}x_{k'j}(\bar{i})y_j(\bar{i})\nonumber\\
 &&\qquad\quad=\frac{N!}{(N-j-1)!j!}x_{\bar{k}0}(k'')^{N-j-1}x_{k'0}(k'')x_{\bar{i}0}(k'')^j,\nonumber
\end{eqnarray}
we obtain the ratio,
\begin{equation*}
 \frac{x_{k'j}(\bar{i})}{x_{\bar{k}j}(\bar{i})}
	= \frac{x_{k'0}(k'')}{x_{\bar{k}0}(k'')}.
\end{equation*}
Since we can choose $j,k',k''$ freely, this equation means that
the ratio of coefficients, $x_{k'j}(\bar{i})/x_{\bar{k}j}(\bar{i})$, is independent of $j$ for all $k'$.
Thus, setting $z_{k'}(\bar{i})=x_{k'j}(\bar{i})/x_{\bar{k}j}(\bar{i})$ and $w_{j}(\bar{i})=x_{\bar{k}j}y_{j}(\bar{i})$, 
we can rewrite $\ket{\Psi}$ as
\begin{equation}
\ket{\Psi}= \mathcal{S}\sum_{j=0}^N \ket{\Psi'(\bar{i},j)}\otimes w_{j}(\bar{i})\ket{e_{\bar{i}}}, \nonumber
\end{equation}
where
\be
\ket{\Psi^\prime(\bar{i},j)}=\ket{\psi^\prime(\bar{i},j)}^{\ot (N-j)}\nonumber
\ee
is given through
\begin{equation*}
 \ket{\psi^\prime(\bar{i},j)} = \sum_{k\neq \bar{i}}z_{k}(\bar{i})\ket{e_k}\in\C^{n+1}.
\end{equation*}
Using the same argument, we can show that this equation holds for all $\bar{i}$.
\end{proof}

With these, we provide:

\noindent
\begin{proof}[Proof of Proposition \ref{unisep-lemma-psi-general}]
According to Lemma \ref{unisep-lemma-psi-ij}, 
the state $\ket{\Psi}$ can be written as
\be
\!\!\!\!\!\!\!\!\ket{\Psi}&=& \mathcal{S}\sum_{j=0}^N \ket{\Psi'(\bar{n},j)}\otimes w_{j}(\bar{n})\ket{e_{\bar{n}}}, \nonumber\\
&=& \mathcal{S}\sum_{j=0}^N \ket{\Psi'(\bar{n}+1,j)}\otimes w_{j}(\bar{n}+1)\ket{e_{\bar{n}+1}}.
\label{nn+1}
\ee
If the coefficients $z_k(\bar{n}+1)w_j(\bar{n}+1)=0$ for all $j,k$ such that $1\leq j \leq N$ and $1\leq k \leq \bar{n}$,
the state $\ket{\Psi}$ becomes i.i.d.
We turn to the case that there exist $\bar{k},\bar{j}$, such that $z_{\bar{k}}(\bar{n}+1)w_{\bar{j}}(\bar{n}+1)\neq 0$.
Comparing the coefficients of $\mathcal{S}\ket{e_{\bar{k}}}^{\otimes N-\bar{j}}\ket{e_{\bar{n}+1}}^{\otimes \bar{j}}$, we find
\begin{eqnarray}
 &&{N\choose\bar{j}}z_{\bar{k}}(\bar{n})^{N-\bar{j}} z_{\bar{n}+1}(\bar{n})^{\bar{j}} = z_{\bar{k}}(\bar{n}+1)^{N-\bar{j}}w_{\bar{j}}(\bar{n}+1) \neq 0.\nonumber
\end{eqnarray}
This means $z_{\bar{k}}(\bar{n})\neq 0$ and $z_{\bar{n}+1}(\bar{n})\neq 0$.
Further comparing the coefficients of $\mathcal{S}\ket{e_{\bar{k}}}^{\otimes (N-j)}\ket{e_{\bar{n}+1}}^{\otimes j}$,
we have
\begin{eqnarray}
 w_j(\bar{n}+1)= {N \choose j}\(\frac{z_{\bar{k}}(\bar{n})}{z_{\bar{k}}(\bar{n}+1)}\)^{N}\(\frac{z_{\bar{k}}(\bar{n}+1)z_{\bar{n}+1}(\bar{n})}{z_{\bar{k}}(\bar{n})}\)^j.\nonumber
\end{eqnarray}
Substituting this expression of $w_j(\bar{n}+1)$ into the Eq. (\ref{nn+1}), we obtain
\be
\ket{\Psi}=\ket{\psi}^{\ot N}\nonumber
\ee
with
\begin{eqnarray}
  \ket{\psi}
	 =\sum_{k\neq\bar{n}+1}  \frac{z_{\bar{k}}(\bar{n})z_{k}(\bar{n}+1)}{z_{\bar{k}}(\bar{n}+1)} \ket{k}
	 +z_{\bar{n}+1}(\bar{n})\ket{e_{\bar{n}+1}},\nonumber
\end{eqnarray}
as required.
\end{proof}

This allows us to complete our proof.

\noindent
\begin{proof}[Proof of Case \ref{unisep-bosonic-n4} of Theorem \ref{th2}]
Combining Proposition \ref{n4iid} and \ref{unisep-lemma-psi-general}, 
we learn that the statement of case 2 holds by induction.
\end{proof}

\section{No Universally Entangled States}
\label{univent}

So far, we have considered only USEP states, but the opposite extreme case may also be worth studying.   Namely, we are interested in the existence of states which are entangled for any choice of measurement setups.   Analogously to USEP states, we introduce:

\begin{definition}
A state $\ket{\Psi}$ is universally entangled (UENT) if $\ket{\Psi(\Gamma, V)}$ is entangled for any $(\Gamma,V)$.
\end{definition}
Unlike USEP,
however, the notion of UENT states is actually useless  because of the following no-go theorem:
\begin{theorem}
\label{th3}
There exists no UENT states for both bosonic and fermionic systems.
\end{theorem}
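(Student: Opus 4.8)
The plan is to show that \emph{every} nonzero state $\ket{\Psi}$ fails to be UENT, by exhibiting for each such state a single measurement setup $(\Gamma,V)$ under which the measurable part $\ket{\Psi(\Gamma,V)}$ is manifestly separable. The guiding observation is that entanglement of $\ket{\Psi^{\text{mes}}}$ can arise only from the tensor product structure of $\mathcal{H}^{\text{mes}}(\Gamma,V)=\bigotimes_{k=1}^s\mathcal{H}_{\mathcal{X}}(\Gamma_k,V_k)$: if all but (at most) one of the factors $\mathcal{H}_{\mathcal{X}}(\Gamma_k,V_k)$ are one-dimensional, then every vector of $\mathcal{H}^{\text{mes}}(\Gamma,V)$ is forced into a single product $\bigotimes_k\ket{\psi_k}$, hence separable. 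It therefore suffices to construct such a ``nearly trivial'' setup whose associated projection does not annihilate $\ket{\Psi}$.

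First I would expand $\ket{\Psi}\neq0$ in an orthonormal basis $\{\ket{e_i}\}_{i=1}^n$ of the one-particle space and pick out one term with a nonzero coefficient. In the bosonic case this singles out an occupation pattern $(l_1,\dots,l_n)$ with $\sum_i l_i=N$ and (after renaming) $l_1\geq1$; in the fermionic case the chosen term reads $\mathcal{A}(\ket{e_1}\otimes\cdots\otimes\ket{e_N})$, so each occupied level carries exactly one particle.

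Next I would tailor the setup to this term. For bosons I take $s=2$ with $\Gamma_1=\{1,\dots,l_1\}$, $\Gamma_2=\{l_1+1,\dots,N\}$, and $V_1=\mathrm{span}\{\ket{e_1}\}$, $V_2=V_1^{\perp}$; since $\dim V_1=1$, the factor $\mathcal{H}_{\mathcal{S}}(\Gamma_1,V_1)$ is one-dimensional exactly as in Case~\ref{unisep-bosonic-n3} of Theorem~\ref{th2}. For fermions I take $\Gamma_1=\{1\}$, $\Gamma_2=\{2,\dots,N\}$, $V_1=\mathrm{span}\{\ket{e_1}\}$, $V_2=V_1^{\perp}$; here $\dim V_1=1=|\Gamma_1|$, so $\mathcal{H}_{\mathcal{A}}(\Gamma_1,V_1)$ is one-dimensional by the dimension count used in Case~\ref{unisep-fermionic-N1} of Theorem~\ref{th1}. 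In both cases the chosen basis term has precisely the occupation prescribed by $(\Gamma,V)$, so it survives the projection and guarantees $\ket{\Psi(\Gamma,V)}\neq0$. Because the first factor is one-dimensional, the image $\ket{\Psi^{\text{mes}}}=\mathcal{X}^{-1}\ket{\Psi(\Gamma,V)}$ must take the product form $\ket{\psi_1}\otimes\ket{\psi_2}$, which is separable; transporting back through the isomorphism $\mathcal{X}$ then shows $\ket{\Psi(\Gamma,V)}$ is separable, so $\ket{\Psi}$ is not UENT.

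The argument is essentially complete once these ingredients are in place, and I expect the only delicate point to be the bookkeeping ensuring the projection is nonvanishing: one must check that the selected basis term indeed lies in $\mathcal{H}_{\mathcal{X}}(\Gamma,V)$, i.e.\ that exactly $|\Gamma_1|$ particles occupy $V_1$ while the remainder lie in $V_2=V_1^{\perp}$, which is why $|\Gamma_1|$ is matched to the occupation number $l_1$ in the bosonic case. As an aside, the degenerate choice $s=1$, $V_1=\mathbb{C}^n$, for which $\ket{\Psi(\Gamma,V)}=\ket{\Psi}$ lives in a single tensor factor and is trivially non-entangled, already disposes of UENT; the $s=2$ construction above is the non-degenerate refinement, consistent with the observation in Section~\ref{bosonthm} that $s\geq2$ is required for entanglement to be meaningful.
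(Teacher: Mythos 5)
Your proposal is correct, but it proves the theorem by a different mechanism than the paper. The paper's proof makes \emph{every} tensor factor one-dimensional: for fermions it takes $s=N$ parts with $\Gamma_k=\{k\}$, $V_k=\mathrm{span}\{\ket{e_k}\}$, and for bosons it matches the parts to the full degeneracy pattern $(g_1,\dots,g_s)$ of a nonvanishing term, so that $\mathcal{H}_{\mathcal{X}}(\Gamma,V)$ is itself one-dimensional and the projected state is computed explicitly as the single term $\sqrt{M}\,\mathcal{X}\bigotimes_k\ket{e_k}^{\otimes g_k}$, which is separable by inspection. You instead use a bipartition ($s=2$) in which only the factor $\mathcal{H}_{\mathcal{X}}(\Gamma_1,V_1)$ is one-dimensional, and let dimension counting do the work: whatever survives the projection is forced into the product form $\ket{\phi_1}\otimes\ket{\psi_2}$, so you never need to exhibit $\ket{\Psi(\Gamma,V)}$ explicitly --- in your bosonic setup it generally retains \emph{all} terms with occupation number $l_1$ in level $1$, not just the chosen one, and that is harmless. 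What your route buys is economy: you avoid the bookkeeping of the degenerate index structure for bosons (only the single occupation number $l_1$ matters), and you reuse the ``all but one factor trivial'' mechanism already at work in Case~\ref{unisep-fermionic-N1} of Theorem~\ref{th1} and Case~\ref{unisep-bosonic-n3} of Theorem~\ref{th2}. What the paper's route buys is an explicit closed form for the projected state and a uniform occupation-pattern treatment of bosons and fermions. One point to tidy up: when the chosen bosonic term is $\ket{e_1}^{\otimes N}$ your $\Gamma_2$ is empty, so the setup degenerates to $s=1$; the paper's construction degenerates in exactly the same way (all indices equal gives $s=1$, projected state the i.i.d.\ term), so this is not a gap, but it deserves an explicit sentence rather than being delegated to the $s=1$ aside at the end.
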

\begin{proof}
First, we consider the fermionic case.
Given a basis $\{\ket{e_j}\}$, any fermionic state $\ket{\Psi}\in \mathcal{H}_{\mathcal{A}}$ can be written as Eq.~(\ref{ferexp}).
Let us relabel the basis
vectors so that $\Psi_{12\cdots N}\neq0$.   If we then choose $\(\Gamma,V\)$ as
\begin{equation}
 \Gamma_k = \{k\},
 \qquad
 V_k = \text{span}\{\ket{e_k}\},\nonumber 
\end{equation}
for $k=1,2,\dots, N$, 
we find that the projected state reads $\ket{\Psi(\Gamma,V)}=\sqrt{N}\mathcal{A}\bigotimes_{i=1}^N\ket{e_i}$, which is separable.

Next, we consider the bosonic case.  Similarly to the fermionic case, any state 
$\ket{\Psi}\in \mathcal{H}_{\mathcal{S}}$ can be expanded with a basis $\{\ket{e_i}\}$ as
\be
\ket{\Psi}=\sum_{1\le i_1\le i_2\le\dots\le i_N\le n}\Psi_{i_1i_2\dots i_N}\, {\cal S}\Bot_{k=1}^N\ket{e_{i_k}}.
\nonumber
\ee
Let $\Psi_{i_1i_2\dots i_N}\neq0$ be a nonvanishing coefficient for an integer set $i_	1\le i_2\le\dots\le i_N$. 
In general, the integer set could be degenerate in the sense that
\be
i_1=&\dots&=i_{g_1},\nonumber\\ 
i_{g_1+1}=&\dots&=i_{g_1+g_2},\nonumber\\
&\dots&\nonumber\\
i_{g_1+\dots+g_{s-1}+1}=&\dots&=i_{g_1+\dots+g_{s}},\nonumber
\ee
where $s$ and $g_1,\dots, g_s$ are positive integers such that $g_1+\dots+g_s=N$.
By introducing $G_i=\sum_{k=1}^i g_k$, we may relabel the basis vectors to replace $i_{G_k}$ with $k$.
Now we choose $\(\Gamma,V\)$ as
\begin{equation}
 \Gamma_k = \{i\}_{i=G_{k-1}+1}^{G_k},
 \qquad
 V_k=\text{span}\{\ket{e_{k}}\},\nonumber
\end{equation}
where $G_0=0$ and $k=1,\dots, s$.
Then we find $\ket{\Psi(\Gamma,V)}=\sqrt{M}\mathcal{S}\bigotimes_k \ket{e_k}^{\otimes g_k}$, which is separable.
\end{proof}

This theorem shows that, whatever the state is,  we can always find a measurement setup which cannot observe quantum correlation inherent to entanglement.  In other words, with respect to that  measurement setup, the state is not entangled.

\section{Conclusion and discussions}
\label{conclusion}
In order to discuss entanglement in identical particle systems, we need to introduce a coherent scheme in which the indistinguishability of the particles is taken into account properly.   Our scheme proposed earlier \cite{Sasaki2011} and used here is one that meets this requirement.  One of the features of our scheme is that entanglement of a given state is not determined by 
the state alone, but also by the measurement setup prepared to observe the correlations furnished by the state. 
For instance, it is possible that an identical state can be regarded as separable and at the same time entangled depending on the measurement setup used.  In view of this relative nature of entanglement, we asked the question if there exist universally separable states, {\it i.e.}, those which are separable for any measurement setups, and if so, what they are.  A similar question applies to the other extreme case of universally entangled states.   

For fermionic systems, the answer to the former question is found to be quite simple: except for some lower dimensional cases, 
there exist no such universally separable states (Theorem \ref{th1}).  
For bosonic systems, the answer is intriguing:  apart from
some lower dimensional cases, there do exist such universally separable states, which are given exclusively by  i.i.d.~pure states: no other states can be universally separable (Theorem \ref{th2}).
 We also learned that 
the universally entangled state does not exist both in fermionic and bosonic systems, irrespective of the dimension
of the one-particle Hilbert space (Theorem \ref{th3}).
We note that these results were obtained upon the assumption that the entire class of measurement setups is specified by the pair $(\Gamma,V)$.
Since a possibility of more general measurement setups has been mentioned earlier \cite{Zanardi2004, Barnum2003}, our results may require some revision if our scheme is extended to accommodate such generalization.

Theorem 1 suggests that i.i.d.~pure states occupy a privileged position in the context of entanglement
in identical particle systems.   Note that these i.i.d.~pure  states belong to a special class of i.i.d.~distributions (or mixed states).  Whereas the latter are generically classical 
and easily generated, the former are hard to generate.
One way to generate the former is to realize a Bose-Einstein condensation of non-interacting particles whose ground state is unique in zero temperature regime. 
This indicates that, in generic situations, all the states of identical particle
systems are basically entangled under some appropriate measurement setup.


\end{document}